\documentclass[letterpaper]{article}
\usepackage[top=1in, bottom=0.9in, left=0.9in, right=1in]{geometry}
\usepackage{parskip} 
\usepackage{amssymb}
\usepackage{makecell}

\usepackage{fullpage}
\usepackage[dvips]{color}
\usepackage{xcolor,float}
\usepackage{latexsym}
\usepackage[activeacute,english]{babel}
\usepackage{graphicx}
\usepackage{dsfont}
\usepackage{amsmath}
\usepackage{bm,bbm}
\usepackage{nccmath}
\usepackage{natbib}
\usepackage{amsthm,appendix,url,inconsolata,wrapfig}

\usepackage{algorithm}
\usepackage{algpseudocode}

\usepackage{mathtools}
\usepackage{enumitem}

\usepackage{multirow,sidecap,wrapfig}

\usepackage{amscd}
\usepackage{amsfonts}
\usepackage{amssymb}
\usepackage[tableposition=top]{caption}
\usepackage{wrapfig}
\usepackage{ifthen}
\usepackage[utf8]{inputenc}
\usepackage{ulem} 
\usepackage{xspace} 

\usepackage{afterpage}
\usepackage{setspace}
\usepackage[customcolors]{hf-tikz}
\usepackage{authblk} 
\usepackage[hidelinks]{hyperref}

\usepackage{caption}
\usepackage{subcaption}

\newtheoremstyle{mytheoremstyle}
  {\topsep} 
  {\topsep} 
  {} 
  {} 
  {\bfseries} 
  {.} 
  {.5em} 
  {} 
\theoremstyle{mytheoremstyle}
\newtheorem{theorem}{Theorem}[section]
\newtheorem{definition}[theorem]{Definition}

\allowdisplaybreaks[4]

\DeclarePairedDelimiter{\abs}{\lvert}{\rvert}

\usepackage{booktabs}
\usepackage{colortbl}

\colorlet{tableheadcolor}{gray!25} 

\newcolumntype{K}[1]{>{\centering\arraybackslash}m{#1}}

\title{Matrix representations and distance metrics for unlabeled ranked phylogenetic networks}
\author[1,2]{Jiayang Wang}
\author[3,4,*]{Julia A. Palacios}
\author[2,5,*]{Claudia Sol\'is-Lemus}
\affil[1]{Department of Statistics, University of Wisconsin-Madison, Madison, WI 53706, USA}
\affil[2]{Wisconsin Institute for Discovery, University of Wisconsin-Madison, Madison, WI 53706, USA}
\affil[3]{Department of Statistics, Stanford University, Stanford, CA 94305, USA}
\affil[4]{Department of Biomedical Data Science, Stanford Medicine, Stanford, CA 94305, USA}
\affil[5]{Department of Plant Pathology, University of Wisconsin-Madison, Madison, WI 53706, USA}
\affil[*]{Corresponding authors. Email: juliapr@stanford.edu, solislemus@wisc.edu}
\date{}

\graphicspath{ {figure/} }

\begin{document}
\onehalfspacing
\maketitle


\begin{abstract}
Phylogenetic networks are graphs inferred from molecular sequence data that represent ancestral histories shaped by reticulate processes such as recombination, hybridization, and horizontal gene transfer. 
We introduce a family of distance metrics for rooted, ranked, unlabeled phylogenetic networks, extending a previously developed distance for ranked trees. Our approach relies on a bijective triangular matrix representation of phylogenetic networks that captures the temporal order of internal events, speciations, and hybridizations.  Our metrics, defined as standard matrix norms, allow efficient quantitative comparisons of network topologies, timed networks and networks with differing numbers of hybridizations. Our distance can be used for both isochronous networks where all tips are sampled at one time point, and heterochronous networks where tips are allowed to be sampled at different time points. We show that our metrics capture biologically meaningful differences among evolutionary histories in both simulations and empirical posterior distributions of viral phylogenetic networks.  
These tools fill a methodological gap, enabling principled comparisons of ranked, unlabeled phylogenetic networks, including ancestral recombination graphs.


\end{abstract}

\textbf{Keywords:} Phylogenetic network, coalescent, ancestral recombination graph, network distance

\section{Introduction}


The ancestral history of a sample of molecular sequences at a particular genetic segment is commonly represented by a binary phylogenetic tree; however, trees are insufficient when genomic regions experience reticulation events such as hybridization, horizontal gene transfer, recombination, or reassortment \citep{kong2025phylogenetic}. Phylogenetic networks extend trees by allowing reticulation events and thus provide a more realistic representation of ancestry for many organisms such as hybrid speciation in plants \citep{morales-briones2018, marcussen2014}, introgression in animals \citep{Yu2014, SolisLemus2016}, bacterial gene flow \citep{abby2012}, and reassortment or recombination in viruses \citep{Muller2020, forster2020, Sanchez-Pacheco2020, Mavian2020}. 

In this work, we focus on ranked unlabeled phylogenetic networks which are rooted acyclic graphs whose internal nodes are ordered (ranked) in time, and whose leaves (sampled taxa or sequences) are unlabeled. We consider both, the network topology only, as well as the timed phylogenetic network with branch lengths. 
%
Ranked unlabeled networks are critical for evolutionary applications. 
Ranking (the temporal ordering of internal events) is intrinsic to many inference frameworks such as the coalescent and birth–death models and encodes evolutionary timing that is lost in unranked graph shapes. Notably, ancestral recombination graphs (ARGs) are phylogenetic networks that represent ancestral histories with recombination  and play a central role in evolutionary inference in population genetics \citep{griffiths1997ancestral}. Similarly, reassortment graphs are phylogenetic networks that represent the evolutionary history with reassortment of viral segments and play an important role in understanding disease dynamics \citep{Muller2020}. 

Motivated by the recent development of distance metrics on ranked, unlabeled phylogenetic trees via matrix encodings \citep{Kim2020}, we extend the metric notion to rooted, ranked, unlabeled phylogenetic networks. 
Although numerous metrics have been proposed for labeled phylogenetic trees \citep{KuhnerYam} and, to a lesser extent, for certain classes of labeled networks \citep{cardona2008metrics, cardona2008distance, cardona2008metrics2, nakhleh2009metric, Huson2010, moulton2017cubic, yakici2022phylogenetic, maxfield2025dissimilarity}, there are currently no metrics designed specifically for ranked, unlabeled phylogenetic networks. Existing approaches typically require shared taxon labels, thus preventing comparisons across non-overlapping samples. Alternatively, the metrics tend to ignore ranking and therefore discard essential temporal information, or they become computationally infeasible for large posterior samples. This gap restricts our ability to quantitatively compare inferred network topologies (and timed networks) across datasets, summarize posterior distributions of networks, or rigorously evaluate estimation procedures. 

A useful metric on the space of ranked unlabeled networks should therefore (i) encode both topology and the temporal ordering of internal events, (ii) accommodate hybridizations (possibly differing in number between networks), (iii) support comparisons of heterochronous samples (different sampling times) as well as isochronous ones, and (iv) be computationally efficient enough to apply to large collections of inferred networks. In particular, analyses of rapidly evolving pathogens, including viruses, often rely on heterochronous sampling, with samples obtained across different time points.  With these considerations in mind, we develop a family of distance metrics that relies on extending the $\mathbf F$-matrix encoding used for ranked trees \citep{Kim2020} to a new matrix representation suitable for networks.
Finally, we define the distance between two networks as a matrix norm of the difference between the two corresponding  $\mathbf F$-matrices.

Concretely, our contributions are as follows.
We introduce a novel triangular matrix representation called $\mathbf{F}$-matrix for rooted, ranked, unlabeled phylogenetic networks that captures internal node ordering, branch length information, and the presence and placement of hybridizations. In the absence of hybridization events, the $\mathbf{F}$-matrix representation of the unlabeled phylogenetic tree reduces to the $\mathbf{F}$-matrix definition in \citet{Kim2020}. An important property of this representation is that the space of $\mathbf F$-matrices is defined as the space of triangular matrices subject to integer and linear constraints. This allows us to easily enumerate the whole space of ranked unlabeled phylogenetic networks, study combinatorial aspects of the phylogenetic space and define summary statistics, such as Frech\'{e}t mean, that can be obtained by solving linear integer optimization problems.  

We use this representation to define a family of distances on ranked networks by applying standard distances defined on matrices. These metrics inherit favorable computational properties from the matrix representation and can be evaluated in time quadratic in the number of leaves in the typical case.
We extend the framework to handle networks with differing numbers of hybridization events via an alignment strategy that matches events across matrices before computing distances. Our distance can also handle both isochronous and heterochronous networks, as well as timed networks.
%
We characterize mathematical properties of the proposed metrics and compare them to existing network-based distances on simulated and empirical datasets, showing that our metrics discriminate among alternative evolutionary scenarios, facilitate computation of summary statistics and construction of credible-sets for posterior network samples, and scale to realistic inference outputs such as posterior samples obtained with popular inference tools such as BEAST 2 \citep{Muller2020}.

\section{Definitions of network topologies}

All phylogenetic networks we consider are rooted and binary. We first assume that all tips correspond to samples obtained at the same time 0 (present time). We call this setting isochronous sampling. Figure \ref{fig:tree.res.ex} (A) shows an example of a rooted ranked (unlabeled) phylogenetic network. The formal definition is as follows:


\begin{definition} A rooted and ranked (unlabeled) phylogenetic network with $n$ leaves and $m$ hybridizations is a connected directed acyclic graph with increasing ordering of internal vertices and with the following characteristics: i) the root $r$ has indegree 0 and outdegree 2; ii) any leaf $v \in V_L$ has indegree 1 and outdegree 0; iii) any tree node $v \in V_T$ has indegree 1 and outdegree 2; iv) any hybrid node $v \in V_H$ has indegree 2 and outdegree 1; v) a tree edge $e \in E_T$ is an edge whose child is a tree node; vi) a hybrid edge $e \in E_H$ is an edge whose child is a hybrid node, and vii) a hybrid edge $e \in E_H$ has an inheritance probability parameter $\gamma_e < 1$ which represents the proportion of the genetic material that the child hybrid node received from this parent edge. \end{definition}

For the networks considered in this manuscript, we ignore the inheritance vector that represents the proportions of genetic material split at hybridization events.


\begin{definition} \label{def:labeled} A rooted and ranked (labeled) phylogenetic network on taxon set $X$ is a connected directed acyclic graph with vertices
$V = \{r\} \cup V_L \cup V_H \cup V_T$ , edges $E = E_H \cup E_T$ and equipped with a bijective leaf-labeling function $f : V_L \rightarrow X$.
\end{definition}

The class of networks of Definition \ref{def:labeled} is called \textit{explicit phylogenetic network} \citep{Huson2010}.
In a rooted (explicit or unlabeled) network, every internal node
represents a biological mechanism: speciation for tree nodes and
hybridization for hybrid nodes. However, other types of phylogenetic
networks also exist in the literature, such as unrooted networks
\citep{Huson2010} and semi-directed networks \citep{SolisLemus2016}.
Unrooted phylogenetic networks are typically obtained by suppressing
the root node and the direction of all edges.  In semi-directed
unrooted networks, on the other hand, the root node is suppressed and the direction of all tree edges are ignored, but the
direction of hybrid edges is maintained, thus preserving the information on which nodes are
hybrids. The placement of the root is then constrained, because the
direction of the two hybrid edges to a given hybrid node inform the
direction of time at this node: the third edge must be a tree edge
directed away from the hybrid node and leading to all the hybrid’s
descendants. Therefore the root cannot be placed on any descendant of
any hybrid node, although it might be placed on some hybrid edges.
Finally, implicit networks \citep{Huson2010} – also called split
networks - describe the discrepancy in gene trees, but they lack
biological interpretation as the internal nodes do not represent
ancestral species.

When phylogenetic networks are inferred in a two-step procedure in which gene trees are first estimated from multi-locus data, and then used to estimate the species network under the Multispecies Coalescent Model, different branches of the resulting network are allowed to be measured in different units of time (or scales) \citep{Meng2009, Yu2012,SolisLemus2016}. 
In contrast, inference
methods that integrate over gene trees (or co-estimate them) starting
from multi-locus sequences \citep{Zhang2018} are able to estimate
time-calibrated networks. 
Here, we focus on time-calibrated networks that are either ultrametric networks (isochronous sampling) or not, under heterochronous sampling. The latter is relevant in the context of rapidly evolving pathogens or analyses of ancient DNA samples.


\section{Unique encoding of unlabeled ranked phylogenetic networks as $\mathbf{F}$-matrices}

We extend the $\mathbf{F}$-matrix encoding of ranked tree shapes introduced in \citet{Kim2020} to unlabeled ranked  phylogenetic networks as follows. 
An $\mathbf{F}$-matrix encoding of a ranked and unlabeled phylogenetic network  with $n$ leaves and $m$ hybridization events is an $(n+2m) \times (n+2m)$ lower triangular matrix of non-negative integers $F_{i,j} \in \{0,1,2,\ldots, n+m\}$. For a given ranked and unlabeled phylogenetic network with $n$ leaves, starting from the root to the tips, we denote the time of a branching or hybridization event at each node $i$ by $u_{i}$ and the time interval between two consecutive nodes $i-1$ and $i$ by $I_{i}=(u_{i},u_{i-1})$. The diagonal elements of the $\mathbf{F}$-matrix indicate the number of branches at each time interval. The non-diagonal element $F_{i,j}$ represents the number of branches extant at $I_{j}=(u_{j},u_{j-1})$ and that are not involved in any event (birth nor hybridization) during the entire time interval $(u_{i},u_{j-1})$. An example of a phylogenetic network with $n=4$ taxa and its corresponding $\mathbf{F}$-matrix representation are shown in Figure \ref{fig:tree.res.ex}. The first column of the matrix is the vector $(1,0,\ldots,0)$ because there is one branch during $(u_{1},u_{0})$ but that branch bifurcates at time $u_{1}$. The second column is the vector $(2,1,0,\ldots,0)$ because there are two branches during $(u_{2},u_{1})$, one of them bifurcates at $u_{2}$ and the second one bifurcates at $u_{3}$. The fifth column is the vector $(5,3)$ because there are 5 branches during $(u_{5},u_{4})$ and two of them for a hybridization at time $u_{4}$. In the following, we provide a formal definition of the space of $\mathbf{F}$-matrices independent of the notion of ranked phylogenetic networks.

\begin{figure}[H]
\begin{subfigure}{0.48\textwidth}
\centering
\includegraphics[width = 0.7 \textwidth]{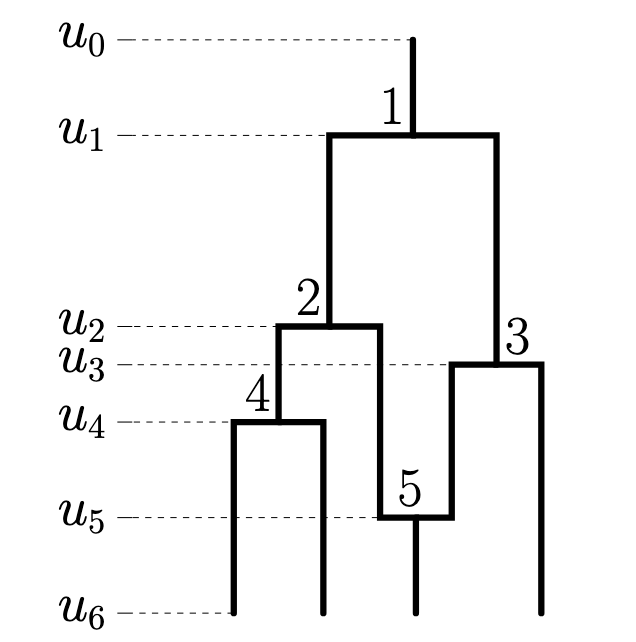}
\caption{}
\label{fig:F-matrix_network}
\end{subfigure}
\begin{subfigure}{0.45\textwidth}
\centering
\includegraphics[width = \textwidth]{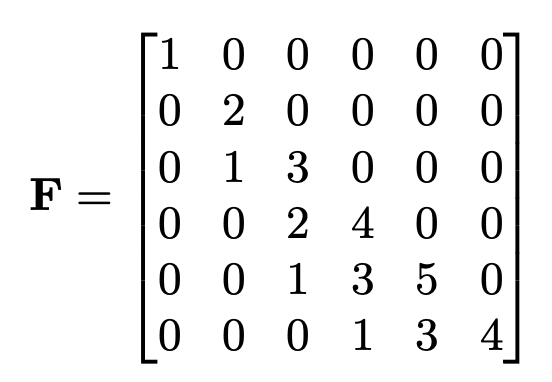}
\caption{}
\label{fig:F-matrix}
\end{subfigure}
\caption{Example of a phylogenetic network and its
    corresponding $\mathbf{F}$-matrix representation. (A) Change points occur at birth and hybridization times denoted by
  $u_{1},\ldots,u_{n+2m-1}$ irrespectively of the type of event and in increasing order from the root (at time $u_{0}$) to the
  tips (at time $u_{n+2m}$). (B) $F_{ij}$ denotes the number of branches that exist in
  $(u_{j},u_{j-1})$ and that do not give birth nor hybridize during the time
  interval $(u_{i},u_{j-1})$. 
  }
\label{fig:tree.res.ex}
\end{figure}

\begin{definition}\label{def:F1space} ($\mathbf{F}$-matrix). An $\mathbf{F}-$matrix $\mathbf{F} \in \mathcal{F}_{n+2m}$ of size $n+2m$ is a lower triangular square matrix of integers $F_{i,j} \in \{0,1,\ldots,n+m\}$ for all $i,j \in \{1,\ldots,n+2m\}$ satisfying the following properties:
\begin{enumerate}
\item The first column and row are $(1,0,\dots,0)$, i.e. $F_{1,1}=1$ and $F_{1,k}=F_{k,1}=0$ for $k\geq 2$.
\item The diagonal elements are $F_{1,1}=1$, $F_{2,2}=2, F_{3,3}=3$, and $F_{n+2m,n+2m}=n$,
$F_{i,i}>0$ and $F_{i,i} \in \{F_{i-1,i-1}-1,F_{i-1,i-1}+1\}$.
\item The subdiagonal elements are $F_{i+1,i}=F_{i,i}-1$ if $F_{i+1,i+1}-F_{i,i}=1$, and $F_{i+1,i}=F_{i,i}-2$ if $F_{i+1,i+1}-F_{i,i}=-1$ for $i \geq 2$.
\item The elements $F_{i,2}$ in the second column, we set $F_{3,2} = 1$. For $i=4,\ldots,n+2m$, there exists an index $k$ with $3 \leq k \leq n+2m$ such that $F_{i,2} = 1$ for $3 \leq i \leq k$ and $F_{i,2} = 0$ for $k < i \leq n+2m$.
\item All the other elements $F_{i,k},$ for $i=k+1,\ldots,n+2m$ and $k=3,\ldots,n+2m-1$ satisfy the following inequality $\max\{0,F_{i-1,k}-2,F_{i,k-1},F_{i,k-1}+F_{i-1,k}-F_{i-1,k-1}-2 \} \leq F_{i,k} \leq \min\{ F_{i-1,k},F_{i,k-1}+2,F_{i,k-1}+F_{i-1,k}-F_{i-1,k-1}\}.$

\end{enumerate}
\end{definition}

To see that an $\mathbf{F}$-matrix encodes a phylogenetic network, we relate the properties of Definition~\ref{def:F1space} to the phylogenetic network as follows. Property 1 states that the first column and first row are the vector $(1,0,\dots,0)$. We start with a single lineage at the root, followed by a bifurcation at time $u_1$. Property 2 states that we only consider phylogenetic networks that start with two consecutive speciation events ($F_{2,2}=2, F_{3,3}=3$). This is consistent with network identifiability \citep{hector2017network}.  In addition, property 2 states that  $F_{i+1,i+1}=F_{i,i}+1$ whenever a bifurcation occurs at time $u_i$ because one lineage splits into two lineages, and $F_{i+1,i+1}=F_{i,i}-1$ whenever a hybridization occurs at time $u_i$ because two lineages merge into one lineage. Since the diagonal entries begin at 1 and the $(n+2m)$-th diagonal entry is equal to n, it follows that there are $n+m$ events of bifurcation ($F_{i+1,i+1}=F_{i,i}+1$) and $m$ events of hybridization ($F_{i+1,i+1}=F_{i,i}-1$). Indeed, $\sum^{n+2m-1}_{i=1} 1\{F_{i,i}-F_{i+1,i+1}>0\}=m$, which corresponds to the total number of hybridization events. In addition, property 2 also implies that $F_{i,i}\leq n+m-\sum^{i-1}_{j=2} \mathds{1}(F_{j,j}-F_{j-1,j-1}=-1)$. To prove it, suppose $F_{i,i} > n+m-\sum^{i-1}_{j=2} \mathds{1}(F_{j,j}-F_{j-1,j-1}=-1)$ for some $i$. Then $F_{n+2m, n+2m} \geq F_{i,i} - (m-\sum^{i-1}_{j=2} \mathds{1}(F_{j,j}-F_{j-1,j-1}=-1)) > n+m-\sum^{i-1}_{j=2} \mathds{1}(F_{j,j}-F_{j-1,j-1}=-1) - (m-\sum^{i-1}_{j=2} \mathds{1}(F_{j,j}-F_{j-1,j-1}=-1)) = n$, which is a contradiction.
Property 3 states that if the $i$-th event is a bifurcation, then one of lineages in the time interval $(u_i, u_{i-1})$ bifurcates at time $u_i$, which means that in column $i$ the value in row $i+1$ is one less than that in row $i$; whereas it decreases by two if the $i$-th event is a hybridization. 
Property 4 refers to the second column of $\mathbf{F}$. It says that the two branches created at the root $F_{2,2}=2$, one of the branches bifurcates at $u_{2}$, i.e. $F_{3,2}=1$, then $F_{k,2}=1$ remains being 1 for $k>3$, or until the other branch bifurcates or hybridizes at $u_{k'}$ for some $k'$ with $3 \leq k' \leq n+2m$ and $F_{l,2}=0$ for $l>k'$.
Property 5 can be expressed as a series of inequality constraints on the values of $F_{i,k}$ given it's neighboring values $F_{i-1,k}$, $F_{i-1,k-1}$, $F_{i,k-1}$. The first inequality says that $\max\{0,F_{i-1,k}-2\} \leq F_{i,k} \leq F_{i-1,k}$; that is, columns are non-increasing and the $i$-th value of column $k$ will be either the previous row value of the same column or the previous row value of the same column minus one or two for rows $i \geq k+1$. The second inequality says that $F_{i,k-1} \leq F_{i,k} \leq F_{i,k-1}+2$, that is rows are non-decreasing and the $k$-th value of row $i$ will be either the previous column value of the same row or the previous column value of the same raw plus one or two for column $k \leq i-1$. The last inequality says that $F_{i-1,k}-F_{i-1,k-1}-2 \leq F_{i,k}-F_{i,k-1} \leq F_{i-1,k}-F_{i-1,k-1}$, i.e. the difference of consecutive values in the i-th row between the $k-1$ and the $k$ columns is either the same as the difference of consecutive values in the previous row or the difference minus one or two. 


Definition \ref{def:F1space} provides an algorithm for enumerating  all possible $\mathbf{F}$-matrices and hence enumerate all possible ranked unlabeled phylogenetic networks. 
To enumerate all $\mathbf{F}$ matrices, we can start by enumerating all possible diagonal vectors (Property 3), together with the constrains that $F_{n+2m,n+2m}=n$, $F_{i,i}>1$ for $i>1$, and  $\sum^{n+2m-1}_{i=1} 1\{F_{i,i}-F_{i+1,i+1}>0\}=m$, and proceed sequentially row by row starting at the fourth row and moving from left to right within the rows.
For example, the $\mathbf{F}$-matrices with $n=3, m=1$ are of the following general form:
\begin{equation}
\label{eq:mtc}
\mathbf{F}=\begin{bmatrix}
1 & 0 & 0 & 0 & 0\\
0 & 2 & 0 & 0 & 0\\
0 & 1&3 &0 & 0 \\
0 & F_{3,1} & 1 & 2 & 0 \\
0 & F_{4,1}& F_{4,2} & 1 & 3 \\
\end{bmatrix}
\& \text{    }
\mathbf{F}=\begin{bmatrix}
1 & 0 & 0 & 0 & 0\\
0 & 2 & 0 & 0 & 0\\
0 & 1&3 &0 & 0 \\
0 & F_{3,1} & 2 & 4 & 0 \\
0 & F_{4,1}& F_{4,2} & 2 & 3 \\
\end{bmatrix}
\end{equation}
where $F_{3,1}$ and $F_{4,1}$ satisfy Property 4 and $F_{4,2}$ satisfies Property 5 in Definition \ref{def:F1space}. There are $4$ different $\mathbf{F}$-matrices of $n=3$ and $m=1$ of the general form of the left matrix in (\ref{eq:mtc}):
 \begin{equation*}
\mathbf{F}_{1}=\begin{bmatrix}
1 & 0 & 0 & 0 & 0\\
0 & 2 & 0 & 0 & 0\\
0 & 1&3 &0 & 0\\
0 & 0 & 1 & 2 & 0 \\
0 & 0& 0 & 1 & 3\\
\end{bmatrix}
,\mathbf{F}_{2}=\begin{bmatrix}
1 & 0 & 0 & 0 & 0\\
0 & 2 & 0 & 0 & 0\\
0 & 1&3 &0 & 0\\
0 & 1 & 1 & 2 & 0 \\
0 & 0& 0 & 1 & 3\\
\end{bmatrix}
,\mathbf{F}_{3}=\begin{bmatrix}
1 & 0 & 0 & 0 & 0\\
0 & 2 & 0 & 0 & 0\\
0 & 1&3 &0 & 0\\
0 & 1 & 1 & 2 & 0 \\
0 & 1& 1 & 1 & 3\\
\end{bmatrix}
,\mathbf{F}_{4}=\begin{bmatrix}
1 & 0 & 0 & 0 & 0\\
0 & 2 & 0 & 0 & 0\\
0 & 1&3 &0 & 0\\
0 & 0 & 1 & 2 & 0 \\
0 & 0& 1 & 1 & 3\\
\end{bmatrix}
\end{equation*}
and 7 different $\mathbf{F}$-matrices of $n=3$ and $m=1$ of the form of the right matrix in (\ref{eq:mtc}):
\begin{equation*}
\mathbf{F}_{5}=\begin{bmatrix}
1 & 0 & 0 & 0 & 0\\
0 & 2 & 0 & 0 & 0\\
0 & 1&3 &0 & 0\\
0 & 0 & 2 & 4 & 0 \\
0 & 0& 1 & 2 & 3\\
\end{bmatrix}
,\mathbf{F}_{6}=\begin{bmatrix}
1 & 0 & 0 & 0 & 0\\
0 & 2 & 0 & 0 & 0\\
0 & 1&3 &0 & 0\\
0 & 0 & 2 & 4 & 0 \\
0 & 0& 2 & 2 & 3\\
\end{bmatrix}
,\mathbf{F}_{7}=\begin{bmatrix}
1 & 0 & 0 & 0 & 0\\
0 & 2 & 0 & 0 & 0\\
0 & 1&3 &0 & 0\\
0 & 1 & 2 & 4 & 0 \\
0 & 0& 1 & 2 & 3\\
\end{bmatrix}
\end{equation*}
\begin{equation*}
\mathbf{F}_{8}=\begin{bmatrix}
1 & 0 & 0 & 0 & 0\\
0 & 2 & 0 & 0 & 0\\
0 & 1&3 &0 & 0\\
0 & 1 & 2 & 4 & 0 \\
0 & 0& 0 & 2 & 3\\
\end{bmatrix}
,\mathbf{F}_{9}=\begin{bmatrix}
1 & 0 & 0 & 0 & 0\\
0 & 2 & 0 & 0 & 0\\
0 & 1&3 &0 & 0\\
0 & 1 & 2 & 4 & 0 \\
0 & 1& 1 & 2 & 3\\
\end{bmatrix}
,\mathbf{F}_{10}=\begin{bmatrix}
1 & 0 & 0 & 0 & 0\\
0 & 2 & 0 & 0 & 0\\
0 & 1&3 &0 & 0\\
0 & 1 & 2 & 4 & 0 \\
0 & 1 & 2 & 2 & 3\\
\end{bmatrix}
,\mathbf{F}_{11}=\begin{bmatrix}
1 & 0 & 0 & 0 & 0\\
0 & 2 & 0 & 0 & 0\\
0 & 1 & 3 & 0 & 0\\
0 & 0 & 2 & 4 & 0 \\
0 & 0 & 0 & 2 & 3\\
\end{bmatrix}
\end{equation*}

Figure \ref{fig:11nets} shows the 11 unlabeled networks with $n=3$ tips and $m=1$ hybridization event corresponding to the 11 $\textbf{F}$ matrices (internal node labels (rankings) are suppressed for clarity in the depictions). 
\begin{figure}[H]
    \centering
    \includegraphics[scale=0.2]{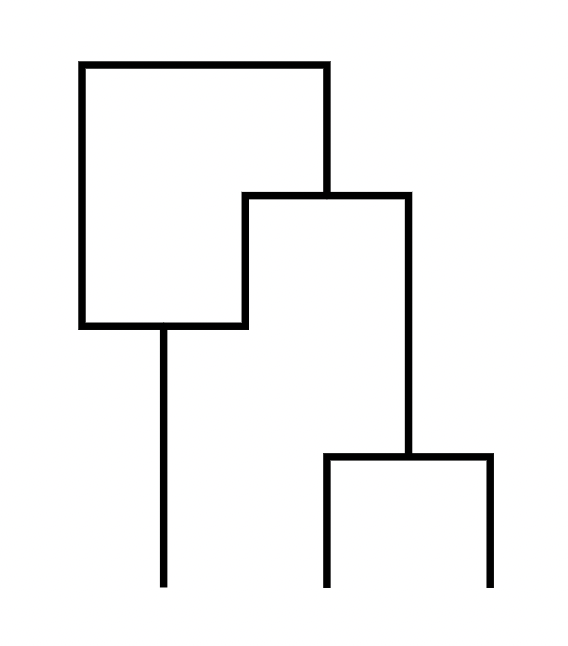}
    \includegraphics[scale=0.2]{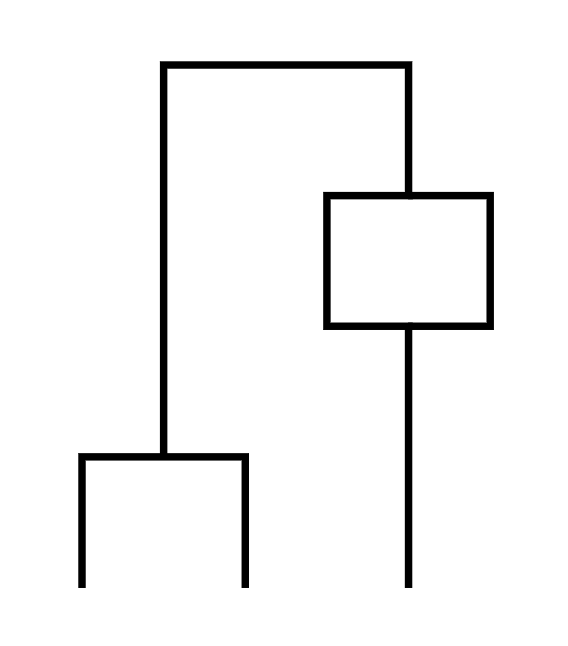}
    \includegraphics[scale=0.2]{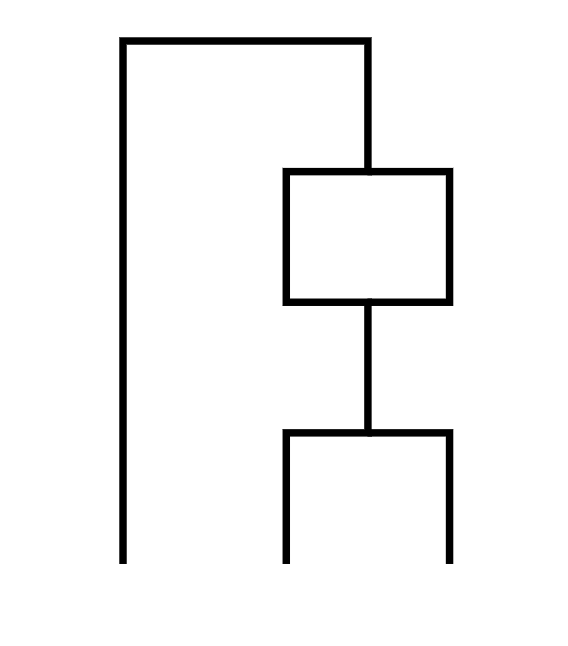}
    \includegraphics[scale=0.2]{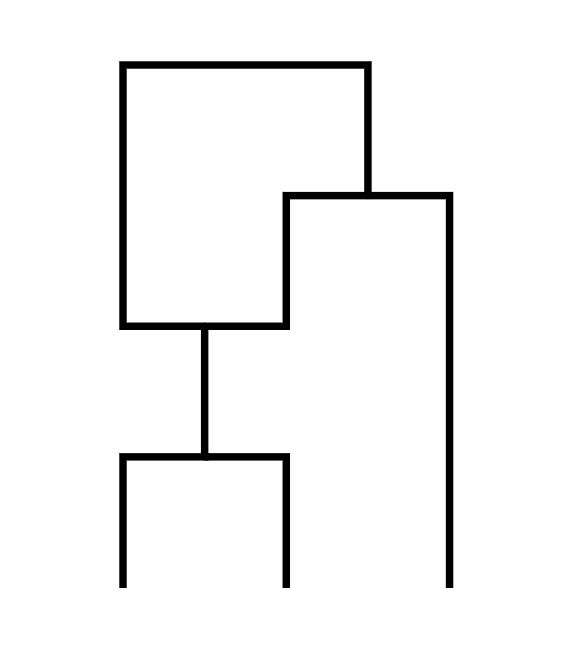}\\
    \includegraphics[scale=0.2]{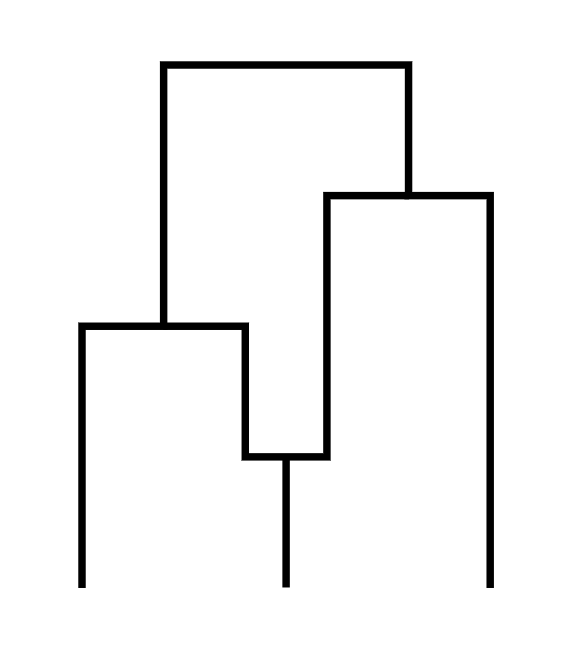}
    \includegraphics[scale=0.2]{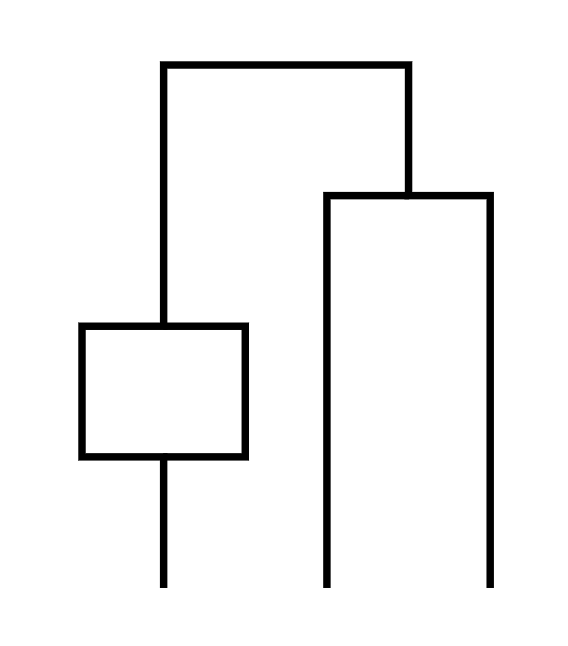}
    \includegraphics[scale=0.2]{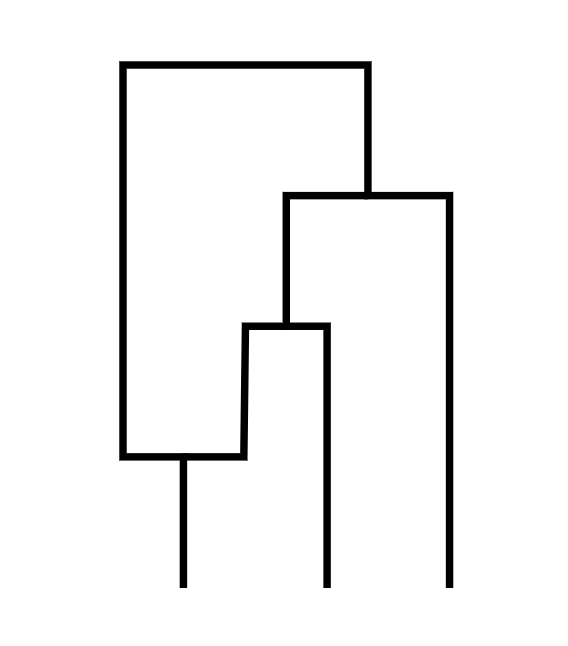}
    \includegraphics[scale=0.2]{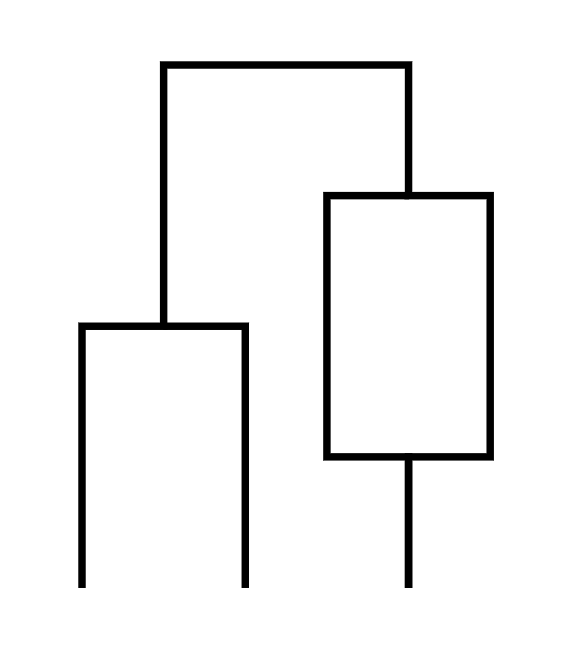}
    \includegraphics[scale=0.2]{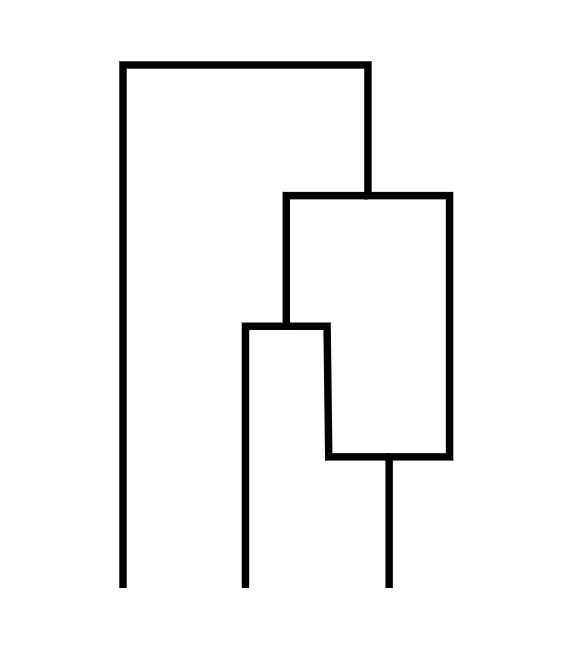}
    \includegraphics[scale=0.2]{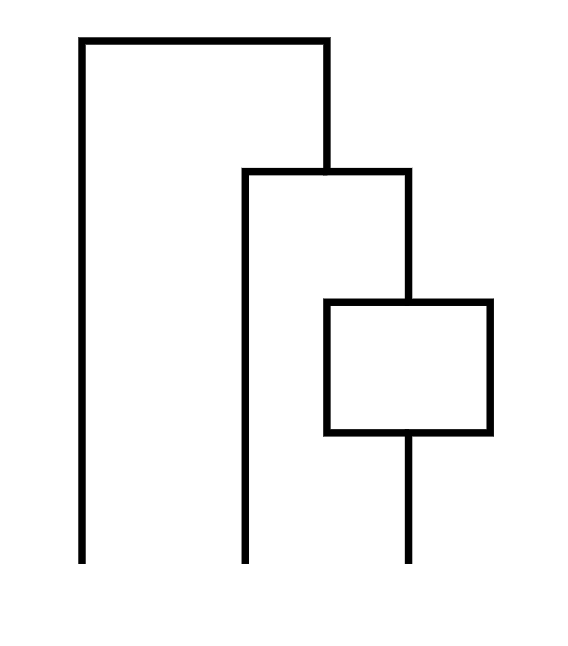}
    \includegraphics[scale=0.2]{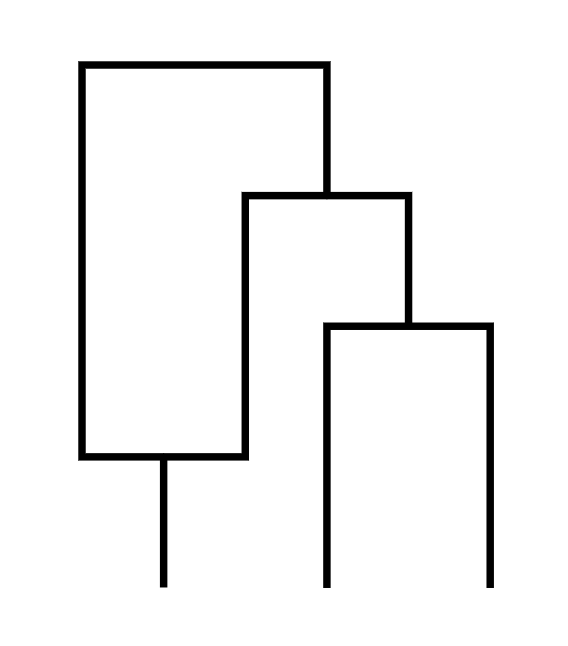}
    \caption{Complete list of unlabeled ranked phylogenetic networks with $n=3$ leaves and $m=1$ hybridization event corresponding to the matrices $\textbf{F}_1,...,\textbf{F}_{11}$ from left to right respectively (top: general form of the left matrix in (\ref{eq:mtc}); bottom: general form of the right matrix in (\ref{eq:mtc})).}
    \label{fig:11nets}
\end{figure}

%
%
%
%
%
%

\subsection{Bijective map between unlabeled ranked networks and $\mathbf{F}$ matrices}


We note that the space of $\mathbf{F}$-matrices is defined independently of the notion of ranked phylogenetic networks. The next theorem establishes a bijection between the two spaces.

\begin{theorem} \textbf{(Bijection).}\label{prop:bij}
There exists a bijective mapping between the space of rooted ranked and unlabeled phylogenetic network topologies of $n$ leaves and $m$ hybridizations $\mathcal{P}^{R}_{n,m}$ and the space $\mathcal{F}_{n,m}$ of $\mathbf{F}$-matrices of size $n+2m.$
\end{theorem}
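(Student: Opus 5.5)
We construct the map $\Phi:\mathcal{P}^{R}_{n,m}\to\mathcal{F}_{n,m}$ exactly as in the informal description preceding Definition~\ref{def:F1space}, and show it is well defined, injective and surjective, following the strategy of \citet{Kim2020} for ranked tree shapes. Throughout we treat a ranked unlabeled network as an equivalence class of ranked networks under isomorphisms that preserve the rank of every internal node. We describe it by the sequence of its intervals $I_1,\dots,I_{n+2m}$, where $I_j$ is delimited by consecutive event times, together with the record, for each lineage alive in $I_j$, of the first event $u_i$ ($i\ge j$) in which it takes part. For a network $P$, set $F_{i,j}$ to be the number of lineages alive in $I_j$ that take part in no event during $(u_i,u_{j-1})$. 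This quantity depends only on the ranked unlabeled structure, so $\Phi$ is well defined.

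\textbf{Step 1: $\Phi(P)$ satisfies Properties 1--5.}
\begin{itemize}
\item Properties 1--3 follow from the lineage counts. A bifurcation increases the number of lineages by one and consumes one lineage. A hybridization decreases the number by one and consumes two lineages. The first two events are bifurcations by the identifiability convention.
\item Property 4 holds because the two root lineages are consumed at $u_2$ and at some later $u_k$.
\item For Property 5, write $F_{i,k}$ as the number of lineages of $I_k$ surviving past $u_{i-1}$. Monotonicity down a column, with drops of at most two, holds because the event at $u_{i-1}$ consumes at most two lineages.
\item Monotonicity along a row, with jumps of at most two, follows from comparing survivors from $I_{k-1}$ and from $I_k$. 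These differ only by the lineages created at $u_{k-1}$: two children of a bifurcation, or one hybrid child. They also differ by the parent(s) consumed at $u_{k-1}$, which survive from neither interval.
\item The mixed inequality follows from the same inclusion–exclusion applied at rows $i-1$ and $i$. Specifically, $F_{i,k}-F_{i,k-1}$ equals the number of lineages born at $u_{k-1}$ that survive past $u_{i-1}$. This count is nonincreasing in $i$ and drops by at most two per step.
\end{itemize}

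\textbf{Step 2: injectivity.} We reconstruct $P$ from $\mathbf{F}=\Phi(P)$ by induction on the interval index.
\begin{itemize}
\item The diagonal tells us, at each $u_i$, whether the event is a bifurcation or a hybridization.
\item The key quantity is $d_{i,k}=F_{i,k}-F_{i,k-1}-(F_{i+1,k}-F_{i+1,k-1})$ for $k\le i$, with the obvious conventions on the diagonal. It counts the lineages born at $u_{k-1}$ that are consumed at $u_i$.
\item Knowing, for the event at $u_i$, which earlier events produced the lineage(s) it consumes fixes the attachment of the new node. Lineages born at the same event are exchangeable in the unlabeled setting, so this identification is unique up to isomorphism.
\end{itemize}
Hence two networks with the same $\mathbf{F}$ are isomorphic as ranked unlabeled networks.

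\textbf{Step 3: surjectivity.} Given $\mathbf{F}\in\mathcal{F}_{n,m}$, we build a network by the same sequential procedure.
\begin{itemize}
\item At step $i$, the vector $(d_{i,k})_k$ must be nonnegative and sum to $1$ for a bifurcation or $2$ for a hybridization.
\item It must also not exceed the number of still-unconsumed children of each earlier event.
\item Nonnegativity is exactly the right-hand inequality of Property 5. The remaining upper bounds are the left-hand inequalities. The total is fixed by Property 3 and telescoping of the rows, and Property 4 handles the boundary column.
\end{itemize}
Thus a valid attachment exists at every step, and the final lineage count is $F_{n+2m,n+2m}=n$. It remains to check that the result is a legitimate network. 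Since the construction proceeds forward in time the graph is acyclic, and connectivity holds because everything descends from the root. A hybridization consumes two distinct lineages, so no parallel edges arise. By construction the image of this network under $\Phi$ is $\mathbf{F}$.

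\textbf{Main obstacle.} The hardest part is Step 3: verifying that the local inequalities of Property 5, which involve only $2\times 2$ neighbourhoods, suffice to guarantee a consistent global choice of consumed lineages at every hybridization. The concern is that the two consumed lineages might come from the same parent in a way that creates forbidden parallel edges, or might exceed the available supply. I would attack this first by proving, by induction on $i$, the invariant that $F_{i,k}-F_{i,k-1}\in\{0,1,2\}$ equals the number of surviving children of event $u_{k-1}$. This turns Property 5 directly into a supply/demand check. If parallel edges turn out to be permitted by the definition, I would then identify exactly which matrices realize them.
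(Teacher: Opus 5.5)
Your argument is essentially the paper's. Your row differences $F_{i,k}-F_{i,k-1}$ (the surviving children of the event at $u_{k-1}$) are the entries of the paper's auxiliary matrix $\mathbf D$, via $D_{ij}=F_{(i+1)(j+1)}-F_{(i+1)j}$. Your $d_{i,k}$ are the consecutive-row drops that D2--D3 constrain. In both proofs, surjectivity comes from reading off the two halves of the mixed inequality in Property~5 and telescoping against Property~3. The invariant you single out as the main obstacle is in fact immediate: Property~3 gives the value $2$ or $1$ at $i=k$, and the definition of $d_{i,k}$ propagates it.

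What does need correcting is the sentence in Step~3 that a hybridization ``consumes two distinct lineages, so no parallel edges arise.'' Two distinct lineages can be siblings. The $-2$ in the mixed lower bound $F_{i,k-1}+F_{i-1,k}-F_{i-1,k-1}-2$ of Property~5 allows a row difference to drop by two across a hybridization. That means both children of the bifurcation at $u_{k-1}$ merge at the same hybrid node, which is a double edge.

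The paper's own enumeration contains such networks. In $\mathbf F_2$ the hybridization at $u_3$ has $d_{3,3}=(3-1)-(1-1)=2$. The same happens in $\mathbf F_3$, $\mathbf F_6$, $\mathbf F_{10}$ and $\mathbf F_{11}$. So five of the eleven networks counted for $n=3$, $m=1$ have parallel edges.

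The theorem is therefore true only if $\mathcal P^R_{n,m}$ admits these multigraphs. Nothing in the paper's $\mathbf X$-encoding, which allows $x_{i,1}=x_{i,2}$, excludes them. Under that reading, delete the sentence and your proof stands. Under a simple-graph reading, surjectivity fails exactly at this step. $\mathcal F_{n,m}$ would then need the extra constraint $F_{i,k}\ge F_{i,k-1}+F_{i-1,k}-F_{i-1,k-1}-1$ whenever $u_{i-1}$ is a hybridization.
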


\begin{proof}
Consider a rooted and ranked network shape $P_{n,m} \in \mathcal{P}^{R}_{n,m}$ with $n$ leaves and $m$ times of hybridization. It is a rooted, directed and acyclic graph in which hybridization nodes have  in-degree 2 and out-degree 1, speciation nodes have in-degree 1 and out-degree 2, leaves are nodes 
with in-degree 1 and out-degree 0, and finally the unique node with in-degree 0 and out-degree
1, which is called the “root”. There is a total of $e=n+2m$ internal nodes (hybridization and speciation nodes, and the root). The internal nodes of $P_{n,m}$ are uniquely labeled by the indices (or rankings) of their event times, and all leaves of $P_{n,m}$ share the same label $e=n+2m$. 

Now, the root is the only node with in-degree 0 and out-degree 1. This root is uniquely defined in the $\mathbf{F}$-matrix by a column with 1 in the first entry and 0s everywhere else, so we can proceed by demonstrating the bijection ignoring the root node. 

We will define a mapping between $\mathcal{P}^{R}_{n,m}$ and the space $\mathcal{F}_{n,m}$ by first using graph notation and show that the graph $\mathbf X$ completely specifies $P_{n,m}$. We  will then construct an explicit mapping from $\mathbf X$ to $\mathbf F$ and show that this mapping is bijective.

We define 
$I=\{1,\ldots,n+2m-1\}$ to be a set of all internal nodes (excluding the root), and all leaf nodes to be $n+2m$. For a speciation node $i \in I$, let $o_i = (x_{i,1}, x_{i,2})$ indicate the ordered pair of labels of the two immediate descendants of speciation node $i$ such that $i < x_{i,1} \leq x_{i,2}$. 
Similarly, for a hybridization node $i\in I$, we set $o_i = (x_{i,1},x_{i,2})$, where we set $x_{i,1}=0$ and  $x_{i,2}$ correspond to the label of the immediate descendant of hybridization node $i$ such that $i < x_{i,2}$.  We denote the set of all pairs of $i$ and $o_i=(x_{i,1}, x_{i,2})$ in $P_{n,m}$ by $\mathbf X=\{(i, o_i) \mid i \in I \}$. Then $\mathbf X$ and the root completely specifies $P_{n,m}$. Figure \ref{fig:bijection_figure} shows an example of a network $P_{5,1}$ with 5 leaves and 1 hybridization. The internal node with index 4 is a speciation node (Figure \ref{fig:bijection_figure1}), with  descendant nodes 6 and 7, therefore $x_{4,1}=6$ and $x_{4,2}=7$, and so we set $o_4 = (6,7)$ and $(4, o_4) = (4,(6,7))$. In Figure \ref{fig:bijection_figure2}, internal node 3 is a hybridization node with  immediate descendant node 4. Therefore, we set $o_3 = (0,4)$ and $(3, o_3) = (3,(0,4))$.

\begin{figure}[H]
\begin{subfigure}[b]{0.45\textwidth}
\centering
\includegraphics[width = \textwidth]{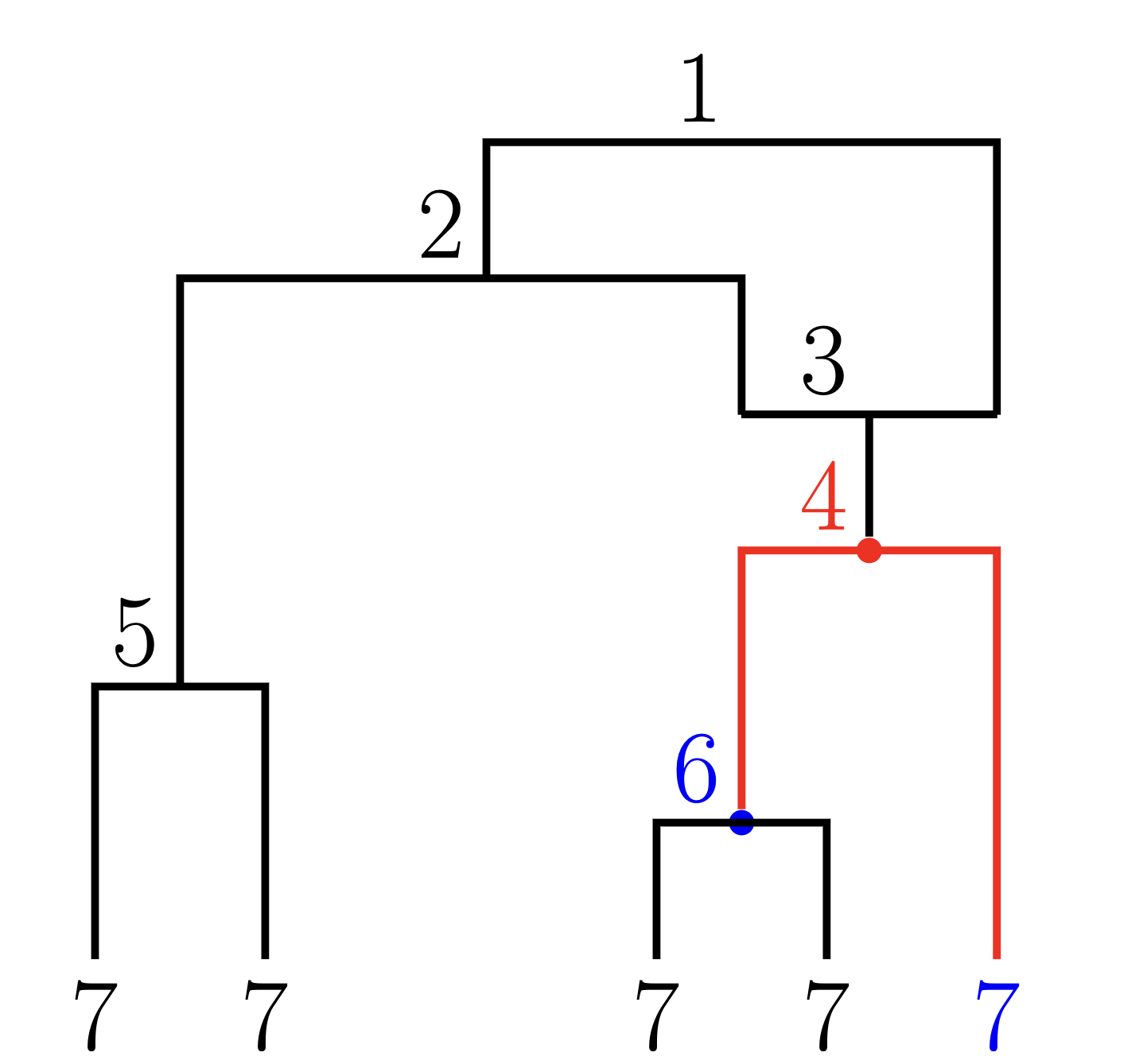}
\caption{}
\label{fig:bijection_figure1}
\end{subfigure}
\hspace{0.09\textwidth}
\begin{subfigure}[b]{0.45\textwidth}
\centering
\includegraphics[width = \textwidth]{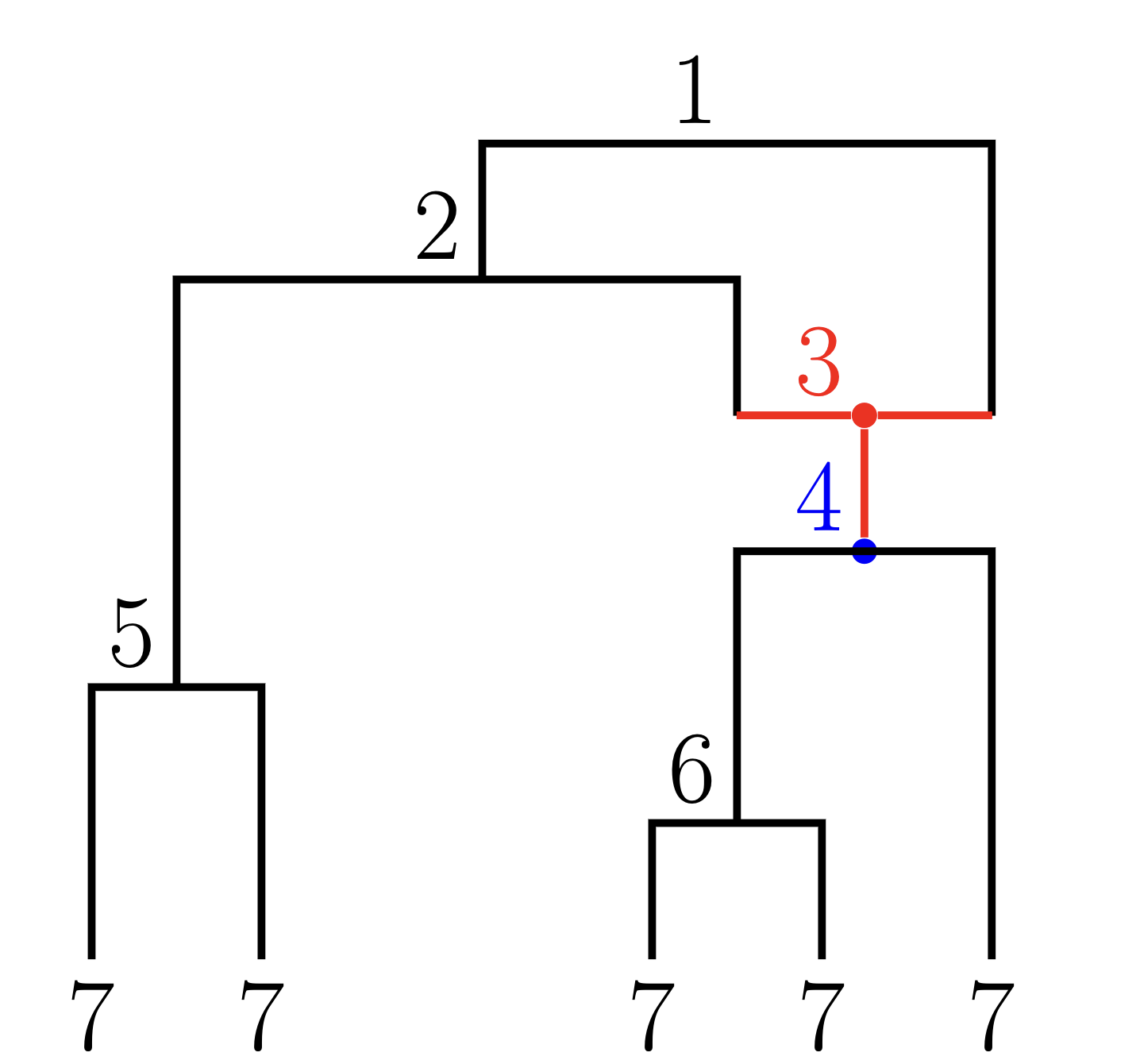}
\caption{}
\label{fig:bijection_figure2}
\end{subfigure}
\caption{A ranked phylogenetic network  exemplifying notation used in the proof of Theorem \ref{prop:bij}. In (A), speciation node with label 4 (red) has two descendants $6$ and $7$ (blue), this information is encoded in graph notation as $(4,o_{4})=(4,(6,7))$. In (B) hybridization node 3 (red) has one descendant node with label 4 (blue). This information is encoded in graph notation as $(3,o_{3})=(3,(0,4))$. Network $\mathbf{X}$ in graph notation is encoded as  $\{(1, (2,3)), (2,(3,5)), (3,(0,4)), (4,(6,7)), (5,(7,7)), (6,(7,7))\}$.
}
\label{fig:bijection_figure}
\end{figure}

We define an auxiliary matrix $\mathbf D$ to be a $(n+2m-1) \times (n+2m-1)$ lower triangular matrix as follows
\begin{align*}
D_{i,j}=\phi(i,o_{i})_{j}=\begin{cases}
0 & \text{ if } 1 \leq j <i \text{ or } x_{i,2} \leq j < e.\\
1 & \text{ if } x_{i,1} \leq j < x_{i,2} \\
2 & \text{ if } i \leq j < x_{i,1} \end{cases}
\end{align*}

where $\phi: \mathbf X \rightarrow \{0,1,2\}^{e-1}$.The $j$-th element of $\phi(i, o_i)$ is the number of immediate descendants of internal node $i$ present or extant at the time interval $(u_{j+1}, u_{j})$. $\phi$ is an injective map. To prove this, consider $i, j \in I$ and $i \neq j$; without loss of generality, assume $i < j$. Then, the $i$-th element of $\phi(i,o_{i})$ is 1 or 2 while the $i$-th element of $\phi(j,o_{j})$ is 0, and thus $\phi(i,o_{i}) \neq \phi(j,o_{j})$.

$D_{i,j}$ must have the following properties:
\begin{itemize}
 \item[$D1:$] $D_{ii} = 2$ if the $i$-th node is a speciation node,
    and $D_{ii} = 1$ if it  is a hybridization node ($D_{11} = 2$ and $D_{22} = 2$).
    \item[$D2:$] $D_{(i-1)j}-2 \leq D_{ij} \leq D_{(i-1)j}$ for $i=j+1,\ldots,n+2m-1$, and $j=2,\ldots,n+2m-2$.
    \item[$D3:$] If node $i$ ($i>1$) is a  hybridization node, then $\sum_{j=1}^{i-1} \mathbf{1}_{\{D_{(i-1)j} - D_{(i)j} = 1\}} + 2 \times \mathbf{1}_{\{D_{(i-1)j} - D_{(i)j} = 2\}} = 2$.
    If node $i$ ($i>1$) is a speciation node, then $\sum_{j=1}^{i-1} \mathbf{1}_{\{D_{(i-1)j} - D_{ij} = 1\}} = 1$.
\end{itemize}

Thus, this matrix $\mathbf D$ is uniquely determined for any given rooted, ranked, and unlabeled binary network shape $P$ in $\mathcal{P}_{n,m}^R$. $\mathbf D$ is a lower triangular matrix. Therefore, it is sufficient to prove that the map $\varphi: \mathcal{D}_{n,m} \rightarrow \mathcal{F}_{n,m}$ is a surjective mapping, where $\varphi(\mathbf D) = \mathbf F$ be given by $F_{ij} = \sum_{k=1}^{j-1} D_{(i-1)k}$ for $i\geq 2$ and $j \leq i$. 
The inverse mapping is $\varphi^{-1}: \mathcal{F}_{n,m} \rightarrow \mathcal{D}_{n,m}$ which is given by $D_{ij} = F_{(i+1)(j+1)} - F_{(i+1)j}$ for $1\leq j \leq i \leq n+2m-1$.

Again, an $\mathbf{F}$-matrix of size $n+2m$ in the space $\mathcal{F}_{n,m}$ is a lower triangular matrix satisfying

\begin{itemize}
\item[$F1:$] The first column and row are $(1,0,\dots,0)$, i.e. $F_{1,1}=1$ and $F_{1,k}=F_{k,1}=0$ for $k\geq 2$.
\item[$F2:$] The diagonal elements are $F_{1,1}=1$, $F_{2,2}=2, F_{3,3}=3$, and $F_{n+2m,n+2m}=n$,
$F_{i,i}>0$ and $F_{i,i} \in \{F_{i-1,i-1}-1,F_{i-1,i-1}+1\}$.
\item[$F3:$] The subdiagonal elements are $F_{i+1,i}=F_{i,i}-1$ if $F_{i+1,i+1}-F_{i,i}=1$, and $F_{i+1,i}=F_{i,i}-2$ if $F_{i+1,i+1}-F_{i,i}=-1$ for $i \geq 2$.
\item[$F4:$] The elements $F_{i,2}$ in the second column, we set $F_{3,2} = 1$. For $i=4,\ldots,n+2m$, there exists an index $k$ with $3 \leq k \leq n+2m$ such that $F_{i,2} = 1$ for $3 \leq i \leq k$ and $F_{i,2} = 0$ for $k < i \leq n+2m$.
\item[$F5:$] All the other elements $F_{i,k}$, for $i=k+1,\ldots,n+2m$ and $k=3,\ldots,n+2m-1$ satisfy the following inequality $\max\{0,F_{i-1,k}-2,F_{i,k-1},F_{i,k-1}+F_{i-1,k}-F_{i-1,k-1}-2 \} \leq F_{i,k} \leq \min\{ F_{i-1,k},F_{i,k-1}+2,F_{i,k-1}+F_{i-1,k}-F_{i-1,k-1}\}.$
\end{itemize}

Let $\mathbf F$ be an element of $\mathcal{F}_{n,m}$ and $\mathbf D = \varphi^{-1}(\mathbf F).$ To prove $\varphi$ is a surjective mapping, it is sufficient to prove that $\mathbf D$ satisfies $D1$ to $D3$.

We first see that $D1$ is automatically satisfied by the definition of $\varphi^{-1}$: If the $i$-th node is a speciation node, then
$D_{ii} = F_{(i+1)(i+1)} - F_{(i+1)i} = F_{(i+1)(i+1)} - (F_{ii}-1) = F_{(i+1)(i+1)} - F_{ii}+1= 2$ using $F2$ and $F3$ ; and if $i$ is a hybridization node, then $D_{ii} = F_{(i+1)(i+1)} - F_{(i+1)i} = F_{(i+1)(i+1)} - (F_{ii}-2) = 1-2 = 1$ using $F2$ and $F3$. 

To prove that $\mathbf D$ satisfies $D2$, the inequality in $F4$ shows that the first column of $\mathbf D$ satisfies $D2$ by the definition of $\varphi^{-1}$. In addition, the inequality in $F5$, where $F_{i,k-1}+F_{i-1,k}-F_{i-1,k-1}-2 \leq F_{i,k} \leq F_{i,k-1}+F_{i-1,k}-F_{i-1,k-1}$ for $i=k+1,\ldots,n+2m$, and $k=3,\ldots,n+2m-1$ shows that $\mathbf D$ satisfies $D2$ for the rest columns by the definition of $\varphi^{-1}$. This is because it implies $(F_{i-1,k}-F_{i-1,k-1})-2 \leq F_{i,k} - F_{i,k-1} \leq F_{i-1,k}-F_{i-1,k-1}$, which is $D_{(i-2)(k-1)}-2 \leq D_{(i-1)(k-1)} \leq D_{(i-2)(k-1)}$ for $i=k+1,\ldots,n+2m$, and $k=3,\ldots,n+2m-1$. This implies $D_{(i-1)j}-2 \leq D_{ij} \leq D_{(i-1)j}$ for $i=j+1,\ldots,n+2m-1$, and $j=2,\ldots,n+2m-2$.

Lastly, inequalities in $F5$ imply that, for $i=1,\ldots,n+2m-1$ and $j=1,\ldots,i$, $D_{ij}-D_{(i+1)j}$ equals either 1 or 2. According to $F2$ and $F3$ and the definition of $\varphi$, $\sum_{j=1}^{i-1} D_{(i-1)j} - \sum_{j=1}^{i-1} D_{ij} = F_{ii} - F_{(i+1)i}$ equals 1 if $F_{(i+1)(i+1)} - F_{ii} = 1;$ or equals 2 if $F_{(i+1)(i+1)} - F_{ii} = -1$ for $i \geq 2$. By combining the two conclusions above, we prove that $\mathbf D$ satisfies $D3$.
   
\end{proof}

\section{Distance metrics on phylogenetic networks}

\subsection{Metrics on rooted unlabeled ranked phylogenetic networks} 

Having established the bijection between the space of $\mathbf{F}$-matrices and rooted ranked and unlabeled phylogenetic network topologies, we  define two distance functions $d_{1}$  and $d_{2}$ on the space $\mathcal{P}_{n,m}$ of phylogenetic networks with $n$ leaves and $m$ hybridizations as the mappings $d_{i}: \mathcal{P}_{n,m} \times \mathcal{P}_{n,m} \rightarrow \mathbb{R}_+$, where $\mathbb{R}_+ = \{x \in \mathbb{R} | x \geq 0 \}$, $i=1,2$, such that for $\mathbf{F}^{(1)}_{n,m}, \mathbf{F}^{(2)}_{n,m} \in \mathcal{P}_{n,m}$,
    \begin{align*}
        d_1(\mathbf{F}^{(1)}_{n,m}, \mathbf{F}^{(2)}_{n,m}) &= \sum_{i,j} \abs*{F_{i,j}^{(1)} - F_{i,j}^{(2)}}, \\
        d_2(\mathbf{F}^{(1)}_{n,m}, \mathbf{F}^{(2)}_{n,m}) &= \sqrt{\sum_{i,j} \left(F_{i,j}^{(1)} - F_{i,j}^{(2)} \right)^2}.
    \end{align*}
The two distances are indeed metrics; both inherit the properties of  $L_{1}$ and $L_{2}$ (euclidean) norms. These distances are only defined on matrices representing phylogenetic networks with the same number of leaves and hybridization events. We allow for different number of hybridization events in Section \ref{sec:diff-hyb}. 

\subsection{Metrics on rooted unlabeled ranked timed phylogenetic networks with isochronous sampling times}
\label{sec:bl}

In the following definition, we include branch lengths to define distances between isochronous timed phylogenetic networks. We define two distances $d^{w}_{1}$ and $d^{w}_{2}$ on the space of ranked, unlabeled and timed phylogenetic networks with $n$ leaves sampled at the same time and $m$ hybridization events. We define the weight matrix $\mathbf W$ as a triangular matrix of size $n+2m$ that contains the relevant branch length information:
$W_{i,j}=u_{j-1}-u_{i}$ for $j<i$ and $W_{i,j}=0$ otherwise. For a pair of timed phylogenetic networks $\mathbf{g}^{(i)}_{n,m}$, $i=1,2$
    \begin{align*}
        d_1^{w}(\mathbf{g}^{(1)}_{n,m}, \mathbf{g}^{(2)}_{n,m}) &= \sum_{i,j} \abs*{F_{i,j}^{(1)}W^{(1)}_{i,j} - F_{i,j}^{(2)}W^{(2)}_{i,j}}, \\
        d_2^{w}(\mathbf{g}^{(1)}_{n,m}, \mathbf{g}^{(2)}_{n,m}) &= \sqrt{\sum_{i,j} \left(F_{i,j}^{(1)} W^{(1)}_{i,j}- F_{i,j}^{(2)}W^{(2)}_{i,j} \right)^2}.
    \end{align*}
where $\mathbf{W}^{(1)}$ and $\mathbf{W}^{(2)}$ are the weight matrices associated with $\mathbf{g}^{(1)}_{n,m}$ and $\mathbf{g}^{(2)}_{n,m}$ respectively. The product of  $F_{i,j}$, the number of extant branches not involved in any event in $(u_{i},u_{j-1})$, and $W_{i,j}$, the corresponding time $u_{j-1}-u_{i}$, is the total branch length of lineages not involved in any event. Therefore, $d^{w}_{1}$ and $d^{w}_{2}$ can be interpreted as weighted distances, weighted by evolutionary time.


\subsection{Metrics on rooted unlabeled ranked timed phylogenetic networks with heterochronous sampling times}
\label{sec:heterochronous}
In order to compute distances between timed phylogenentic networks with heterochronous sampling, we need to compute the branch length of lineages not involved in any event and substract the time period a lineage was not present due to sampling.
An $h\mathbf{F}$-matrix is an extension of $\mathbf F$-matrix for rooted, unlabeled, timed, binary and heterochronous phylogenetic networks with $n$ leaves and $m$ hybridization events, where leaves can have different collection times. The $h\mathbf{F}$-matrix is an $(n+2m) \times (n+2m)$ lower triangular matrix, in which $(hF)_{i,j}$ denotes the total branch length of lineages extant and that are not involved in any event (branching or hybridization)  in the interval $(u_{i}, u_{j-1})$. Let $x_{1},\ldots,x_{n}$ denote the sampling times of the $n$ samples, then to compute $(hF)_{i,j}$ we first compute the total branch length assuming the network is isochronous as before i.e $F_{i,j}(u_{j-1}-u_{i})$ and then substract the missing branch lengths in the interval  $\sum^{n}_{k=1}(x_{k}-u_{i})\mathbbm{1}(u_{i}<x_{k}<u_{j-1})$.

In the example illustrated in Figure \ref{fig:hF_example}, the network $H_{4,1}$ is a heterochronous phylogenetic network with four leaf nodes and a single hybridization event. When using the $h\mathbf{F}$-matrix embedding, for example, the entry $(hF)_{3,3}$ is $3(u_{2}-u_{3})- (x_{1}-u_{3})=3(u_2-x_1)+2(x_1-u_3)$ representing the sum of branch lengths within the interval $(u_{3}, u_{2})$. 

\begin{figure}[H]
\begin{subfigure}{0.45\textwidth}
\centering
\includegraphics[width = 0.8 \textwidth]{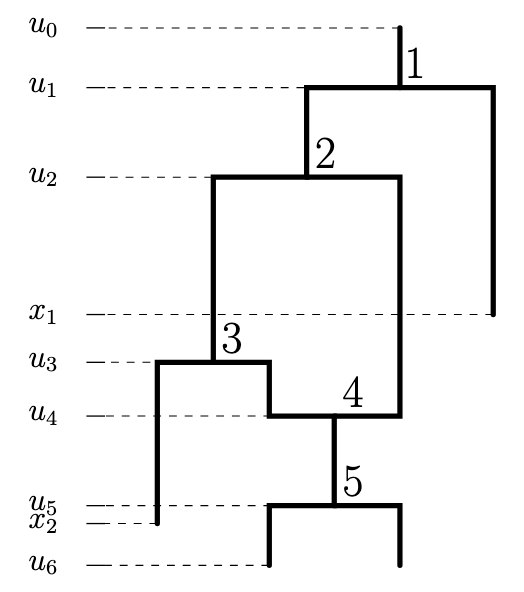}
\caption{}
\label{fig:heterochronous_network}
\end{subfigure}
\begin{subfigure}{0.48\textwidth}
\centering
\includegraphics[width =  \textwidth]{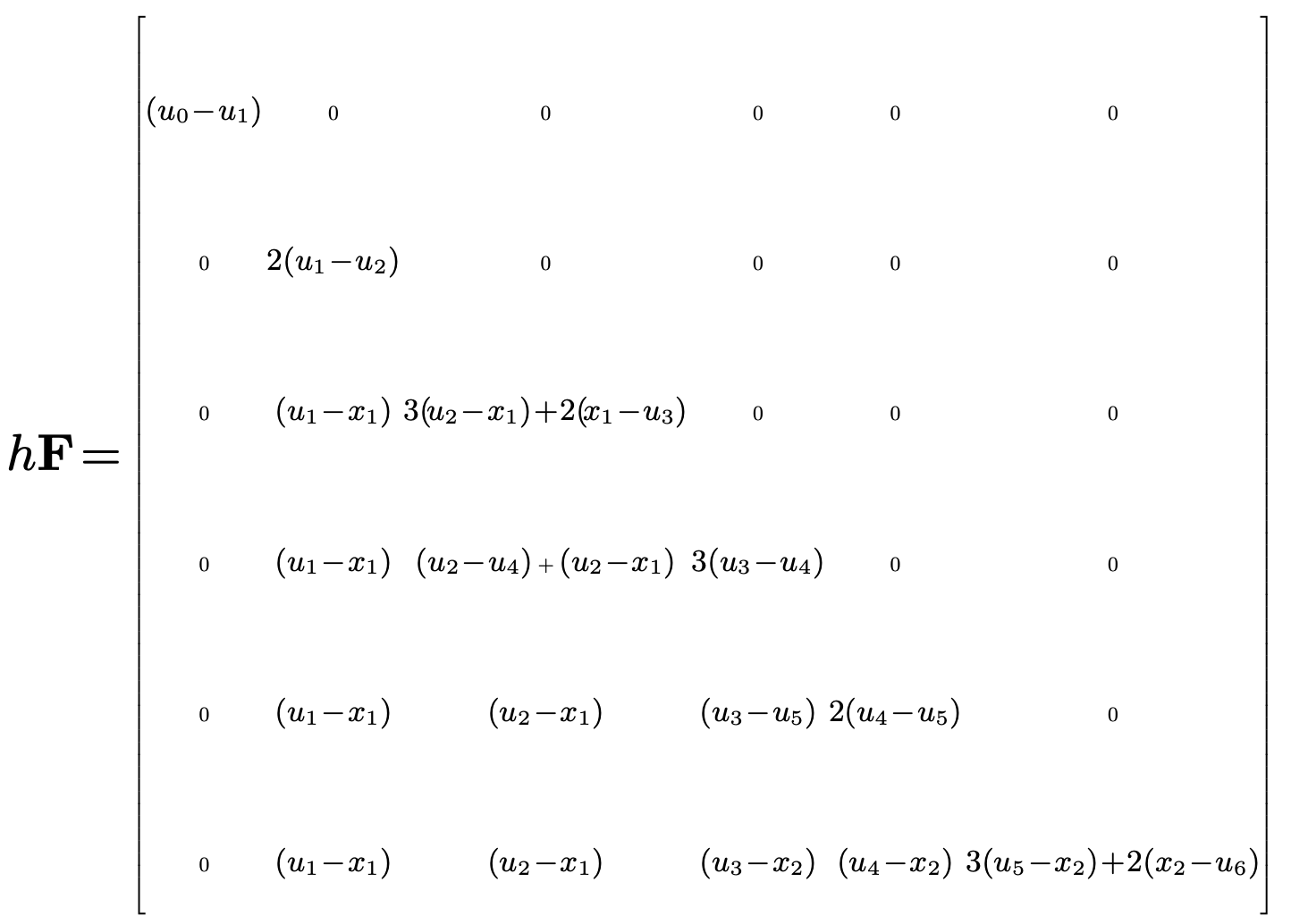}
\caption{}
\label{fig:hF_matrix}
\end{subfigure}
\caption{(A) A heterochronous phylogenetic network $H_{4,1}$. (B) $h\mathbf{F}$-matrix embedding of the network.}
\label{fig:hF_example}
\end{figure}

Based on the $h\mathbf{F}$-matrix embedding, we define two distance functions $d_1^{h}$ and $d_2^{h}$ for heterochronous phylogenetic networks with $n$ leaves and $m$ hybridizations. For a pair of heterochronous phylogenetic networks $\mathbf{h}^{(i)}_{n,m}$, $i=1,2$
    \begin{align*}
        d_1^{h}(\mathbf{h}^{(1)}_{n,m}, \mathbf{h}^{(2)}_{n,m}) &= \sum_{i,j} \abs*{(hF)_{i,j}^{(1)} - (hF)_{i,j}^{(2)}}, \\
        d_2^{h}(\mathbf{h}^{(1)}_{n,m}, \mathbf{h}^{(2)}_{n,m}) &= \sqrt{\sum_{i,j} \left((hF)_{i,j}^{(1)} - (hF)_{i,j}^{(2)} \right)^2}.
    \end{align*}
where $h\mathbf{F}^{(1)}$ and $h\mathbf{F}^{(2)}$ are the $h\mathbf{F}$-matrix embedding matrices associated with $\mathbf{h}^{(1)}_{n,m}$ and $\mathbf{h}^{(2)}_{n,m}$, respectively.

\subsection{Metrics on rooted unlabeled ranked phylogenetic networks with different number of hybridization events}
\label{sec:diff-hyb}

%
Consider two phylogenetic networks with $n$ leaves, $P^{(1)}_{n,m_{1}}$ and  $P^{(2)}_{n,m_{2}}$ with $m_{1}$ and $m_{2}$ hybridization events respectively. In order to use our distances, we require the two $\mathbf{F}$-matrix representations of $P^{(1)}_{n,m_{1}}$ and  $P^{(2)}_{n,m_{2}}$ to be of the same size. To do this, we insert artificial events as follows. For $i=1,2$, let $\mathbf{E}^{(i)}=(e^{(i)}_{n+2m_{i}},\ldots,e^{(i)}_{1})$ be the vector of ordered event types (hybridization and branching) events of the $i$-th network, with $e^{(i)}_{j} \in \{b,h\}$ at time $u^{(i)}_{j}$, where $b$ denotes a branching event and $h$ denotes a hybridization event. In the example illustrated in Figure \ref{fig:different_hybridization_network_examples}, the event vectors for $P^{(1)}_{n,m_{1}}$ and  $P^{(2)}_{n,m_{2}}$ are $\mathbf{E}^{(1)}=(b,b,h,b,b,b)$ and $\mathbf{E}^{(2)}=(b,h,b,h,b,b,b,b)$.

We first align all branching events between the two networks by adding empty spaces from left to right as depicted in Figure \ref{fig:align}. Once all the type-b events are aligned, we next align the hybridization events between two consecutive branching events or before the first branching event, from left to right. The event vector alignment is demonstrated in Figure \ref{fig:align}.  If one network has more type h events than the other in a given interval between branching events, we insert the excess artificial events, denoted by a's in the network with fewer type h events in that interval. We then augment the two weight matrices $\mathbf W^{(1)}$ and $\mathbf W^{(2)}$ to match the dimensions of the augmented $\mathbf{F}$ matrices. If $n_{a}$ artificial events are inserted between events $e^{(i)}_{j+1}$ and $e^{(i)}_{j}$, we subdivide the corresponding time interval $[u^{(i)}_{j+1},u^{(i)}_{j}]$ into $n_{\alpha}+1$ intervals with equal length $\Delta=\frac{u^{(i)}_{j}-u^{(i)}_{j+1}}{n_{a}+1}$. Figure \ref{fig:aligned_networks} shows the two aligned networks.
%



\begin{figure}[H]
\begin{subfigure}[b]{0.45\textwidth}
\centering
\includegraphics[width = \textwidth]{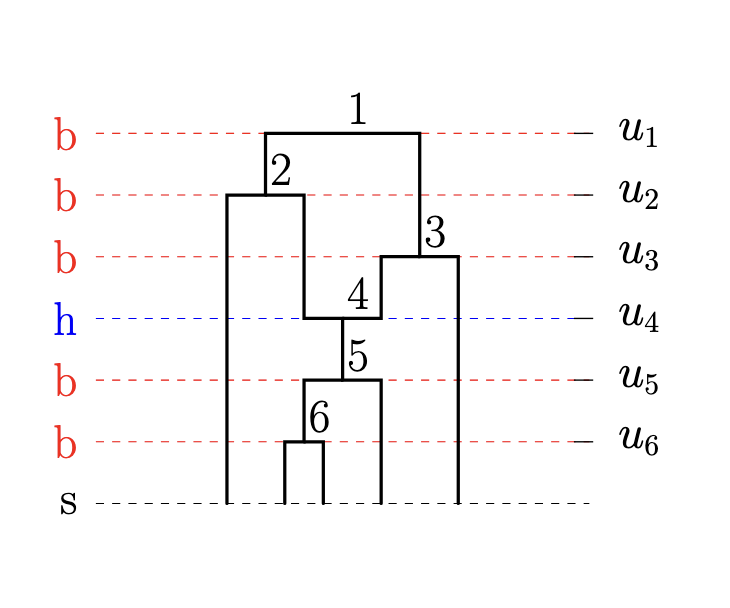}
\caption{}
\label{fig:different_hybridization_network_example1}
\end{subfigure}
\hspace{0.05\textwidth}
\begin{subfigure}[b]{0.45\textwidth}
\centering
\includegraphics[width = \textwidth]{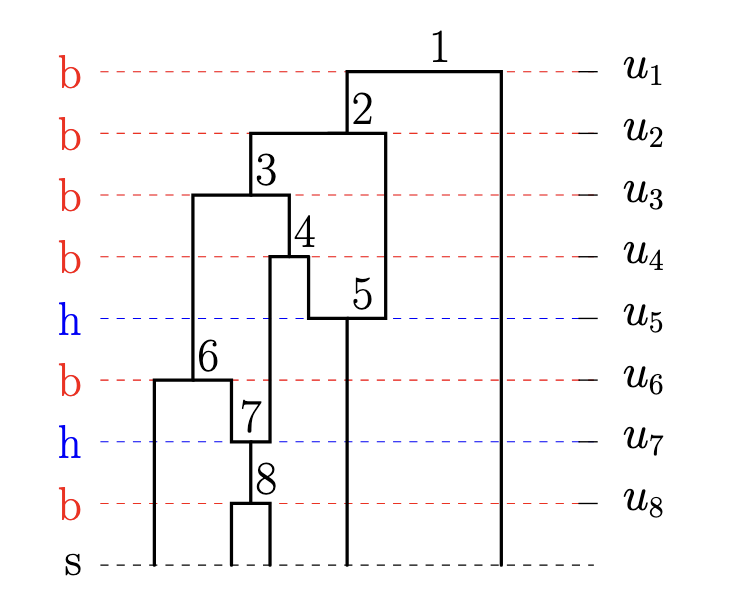}
\caption{}
\label{fig:different_hybridization_network_example2}
\end{subfigure}
\caption{(A) $P^{(1)}_{n,m_{1}}$ and (B) $P^{(2)}_{n,m_{2}}$ are two phylogenetic networks with the same number of leaves but different number of hybridization events. Specifically, they both have five leaves, but $P^{(1)}_{n,m_{1}}$ has one hybridization while $P^{(2)}_{n,m_{2}}$ has two. Each event is labeled as $b$, a branching event, or $h$, a hybridization event.}
\label{fig:different_hybridization_network_examples}
\end{figure}


\begin{figure}[H]
\includegraphics[width=0.8\textwidth]{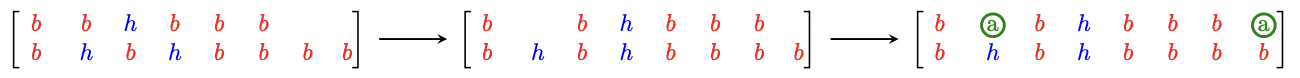}
\centering
\caption{Alignment of event vectors of two networks. Branching events (b) are aligned first and then hybridization events (h) are aligned between two branching events or before the first branching event. In this example, since h events are automatically aligned after the b events are aligned, we omitted the additional step of aligning h events. Then we insert the excess artificial events a's in the empty spaces.}
\label{fig:align}
\end{figure}

\begin{figure}[H]
\begin{subfigure}[b]{0.45\textwidth}
\centering
\includegraphics[width = \textwidth]{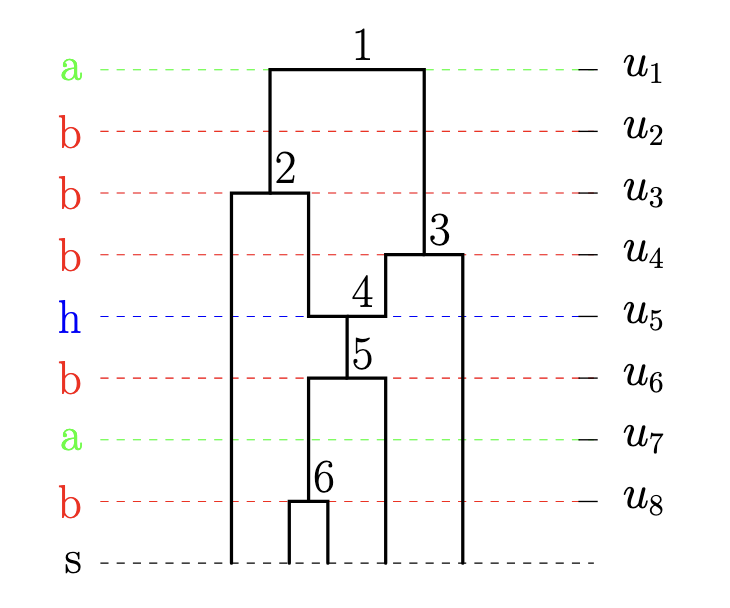}
\caption{}
\label{fig:aligned_network1}
\end{subfigure}
\hspace{0.05\textwidth}
\begin{subfigure}[b]{0.45\textwidth}
\centering
\includegraphics[width = \textwidth]{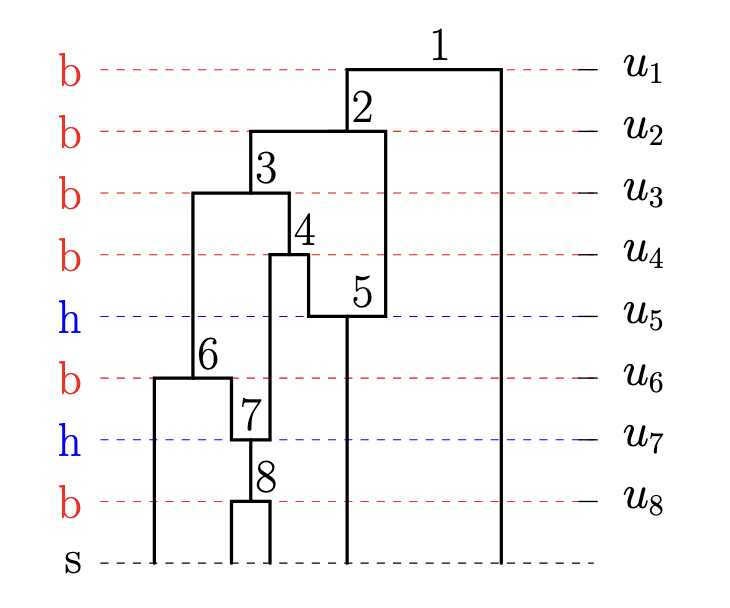}
\caption{}
\label{fig:aligned_network2}
\end{subfigure}
\caption{Aligned ranked phylogenetic networks (original networks in Figure \ref{fig:different_hybridization_network_examples}). Each event is labeled as $b$, a branching event, $h$, a hybridization event, or $a$, artificial event added for alignment.}
\label{fig:aligned_networks}
\end{figure}



\section{Probability models on networks}

We will test our proposed metrics in Section~\ref{sec:simulations} on simulations generated according to a recently proposed beta-splitting model on rooted unlabeled ranked phylogenetic networks \citep{zhong2026beta}. A wide range of probabilistic models have been developed to describe the diversification of lineages on tree topologies, including the Yule (pure-birth) \citep{Yule1925}, and coalescent models \citep{wakeley_coalescent_2008}, that generate uniform tree topologies, and the one parameter beta-splitting model \citep{Aldous1996,Blum2006} that generates trees with different balance/unbalance tendencies depending on the value of the parameter. In contrast, analogous models for phylogenetic networks are still lacking. The  forward in-time birth–hybridization model \citep{Zhang2018}, and the backward in-time ancestral recombination graph \citep{griffiths1997ancestral}, both assume the phylogenetic network topology is uniform. Recently, \citet{janssen2021comparing} introduced a beta-splitting model on networks that first generates a tree topology according to a beta-splitting model, and then hybrid edges are randomly added on the tree. In contrast, \cite{zhong2026beta} present a one-parameter ($\beta\geq -1$) beta-splitting model that generates a network topology directly given the sequence of speciation and hybridization events. We state their algorithm  (Alg.~\ref{alg:beta_splitting}) for completeness. Similar to the beta-splitting model for trees, this model generates ``unbalanced" or ``ladder" like network topologies for small values of $\beta$ and more ``balanced" with large number of cherries for large values of $\beta$. A cherry is a subtree subtending two leaves. 

To generate the sequence of speciation and hybridization events, we used \citet{zhong2026beta} birth-hybridization model conditioned on a fixed number of leaves and hybridization events. The birth-hybridization continuous-time Markov Chain is denoted by $(R(t),Y(t))$, that keeps track of the number $R(t)$ of hybridizations  at time $t$ and the total number $Y(t)$ of lineages at time $t$, conditioned on experiencing $m$ reticulations and reaching $n$ lineages only once after $m$ hybridizations have occurred. The initial state is $(R(0),Y(0))=(0,1)$ and the transition rates are:

\begin{align}
  \lambda_{(r,i),(s,j)}=\begin{cases} i \mu & j=i+1, s=r \text{ birth/speciation}\\
    \binom{i}{2} \rho & j=i-1, s=r+1, i>1, r<m \text{ death/reticulation}\\
    -i\mu-\binom{i}{2} \rho \mathbbm{1}(r<m) & j=i, i>1\\
    -\mu & j=i=1\\
    0 & \text{otherwise.}
    \end{cases} 
\end{align}

In particular, to avoid identifiability issues (see Lowest Stable Ancestor (LSA) in \citep{hector2017network}), we assume the first two transitions are $(0,1)$ to $(0,2)$, and $(0,2)$ to $(0,3)$ with rates $\mu$ and $2\mu$ respectively. Given a full realization of the Markov chain, we can extract the sequence $s_{1},\ldots,s_{n+2m-1}$ that indicate whether the $i$-th event was a speciation $s_{i}=1$ or a hybridization $s_{i}=-1$. This sequence is then the input of Algorithm~\ref{alg:beta_splitting}.



\begin{algorithm*} 
 \caption{Simulation algorithm for the beta-splitting model}\label{alg:beta_splitting}
    \begin{algorithmic}
        \State \hspace{0.4em} \textbf{Input:} 
        $s_{1},\ldots,s_{n+2m-1}$, where $s_{i}=1$ indicates speciation and $s_{i}=-1$ indicates hybridization
        \State \hspace{4.4em} $\beta$ (beta-splitting model parameter, $\beta\geq 1$)

        \State \hspace{0.4em} \textbf{Output:} A rooted ranked unlabeled network
        \State \hrulefill
    \end{algorithmic}

    \begin{algorithmic}[1]
        \State Construct the generative sequences: $B_{1}, B_{2}, ..., B_{n+2r-2} \sim \text{Beta} (\beta+2, \beta+2)$; $U_{1}, U_{2}, ..., U_{n+2r-2} \sim \text{Unif} [0,1]$; $V_{1,1}, V_{1,2}, V_{2,1}, V_{2,2}, ..., V_{r,1}, V_{r,2} \sim \text{Unif} [0,1]$
        \State The root is labeled to be the interval $[0,1]$ 
        \State At step 1, we split the root node into a left leaf node labeled by $[0,b_{1}]$ and a right leaf node labeled by $[b_{1},1]$, where $b_{1}$ is the value of $B_{1}$. Change the root label to be the integer 1, indicating that the root speciates into two lineages at time point 1.
        \For{step $k \in \{2,3,...,n+2r-2\}$}
            \If{$s_{k}=1$ event}
                \State Find the leaf node whose interval label $\cup_{t \in T} [x_{t},y_{t}]$ contains $U_k$. 
                \State Change this leaf's label to the integer $k$, indicating that the node speciates into two lineages at time point $k$
                \State Split the leaf into a left leaf node with label $\cup_{t \in T} [x_{t},x_{t} + (y_{t} - x_{t}) * b_k]$ and a right leaf node with label $\cup_{t \in T} [x_{t} + (y_{t} - x_{t}) * b_k, y_{t}]$
            \Else
                \State Find the two leaf nodes whose interval labels $\cup_{t \in T_{1}} [x_{t},y_{t}]$ and $\cup_{t \in T_{2}} [x_{t},y_{t}]$ contain $V_{k,1}$ and $V_{k,2}$, respectively
                \State Change both their labels to be the integer $k$, indicating that the two lineages hybridize at time point $k$
                \State Merge these two leaves into a single leaf node with label $\cup_{t \in T_{1} \cup T_{2}} [x_{t},y_{t}]$.
            \EndIf
        \EndFor
    \end{algorithmic}
\end{algorithm*}

\section{Results}

In this section, we evaluate the ability of our metrics to distinguish between different sampling distributions and show the applicability of our metrics in analyzing posterior distributions of viral networks.

To our knowledge, there are no other distances defined on rooted, ranked and unlabeled networks. For this reason, we compare the performance of our distances to distances defined on labeled networks. To have a valid comparison, we labeled each unlabeled phylogenetic network in a unique way as follows. 
We impose an induced labeling on the unlabeled network by ordering its leaf nodes according to the recency of their most recent common ancestors. Leaves descending from more recent ancestral events are assigned smaller label indices.

We will compare our distance with the following distances:

\textbf{Hamming distance:} This distance is computed between the corresponding adjacency matrices.  The two matrices are then compared entry by entry, and the distance is defined as the total number of entries for which the two matrices differ. In this way, the Hamming distance counts how many edges must be added or removed to transform one network into the other.

\textbf{Hardwired cluster distance \citep{Huson2010}:}  The distance is computed by extracting the collection of clusters induced by its edges from each phylogenetic network, where each cluster corresponds to the set of descendant leaves below that edge. The two resulting cluster sets are then compared, and the distance is defined as the number of clusters that appear in one network but not in the other.

\textbf{Diffusion distance \citep{Hammond2013}:}  The distance is computed by comparing diffusion processes induced by the graph Laplacians. For each network, a heat kernel is obtained by exponentiating its Laplacian, which characterizes how information diffuses across the network over time. The diffusion distance between two networks is then defined as the Frobenius norm of the difference between their heat kernels, quantifying differences in how the two network structures transmit information.


\subsection{Simulated data} \label{sec:simulations}

We generated 100 rooted unlabeled and ranked phylogenetic networks with $n=100$ tips and $m=10$ hybridizations from five different beta-splitting distributions on network topology with parameters:  $\beta = \{-0.9, -0.5, 0.0, 1.0, 100\}$. To simulate the sequence of speciations and hybridizations, we assumed  $\mu=4.0$ and $\rho=0.4$. Varying the parameter $\beta$ produced markedly different phylogenetic network topologies. The first row of Figure~\ref{fig:simulated_five_beta_values} shows the medoid networks from each $\beta$ group, illustrating systematic changes in topological balance: networks generated with $\beta = -0.9$ are the most unbalanced, whereas those with $\beta = 100$ are the most balanced. 
The second row of Figure \ref{fig:simulated_five_beta_values} shows how the histograms of cherry counts vary across the range of $\beta$ values. As expected, networks simulated under distributions with small value of $\beta$ have smaller number of cherries than networks simulated with large value of $\beta$. In a phylogenetic network, a cherry is defined as a pair of leaves that share an immediate common ancestor. Cherry counts provide a simple yet informative summary of network topology, as networks with differing shapes often exhibit distinct numbers of such leaf pairs. The observed variation in cherry counts across $\beta$ values thus offers additional evidence for topological differences among the network groups.
 
We computed pairwise distance matrices among all simulated networks for each of the four distances under consideration (our $d_{2}$ metric, hardwired cluster, Hamming, and diffusion). Multidimensional scaling (MDS) was then applied to visualize the relative positions of the ranked phylogenetic network shapes across the different $\beta$ values.
Figure \ref{fig:four_dist} compares the distribution discrimination performance of the four metrics and demonstrates that our $d_2$ metric more effectively discriminates among the network topologies than the alternative approaches. Moreover, our $d_2$ metric exhibits the shortest computational running time (Table \ref{tab:running_time}).


\begin{figure}[H]
\centering
\hspace{0.05cm}\includegraphics[scale=0.10]{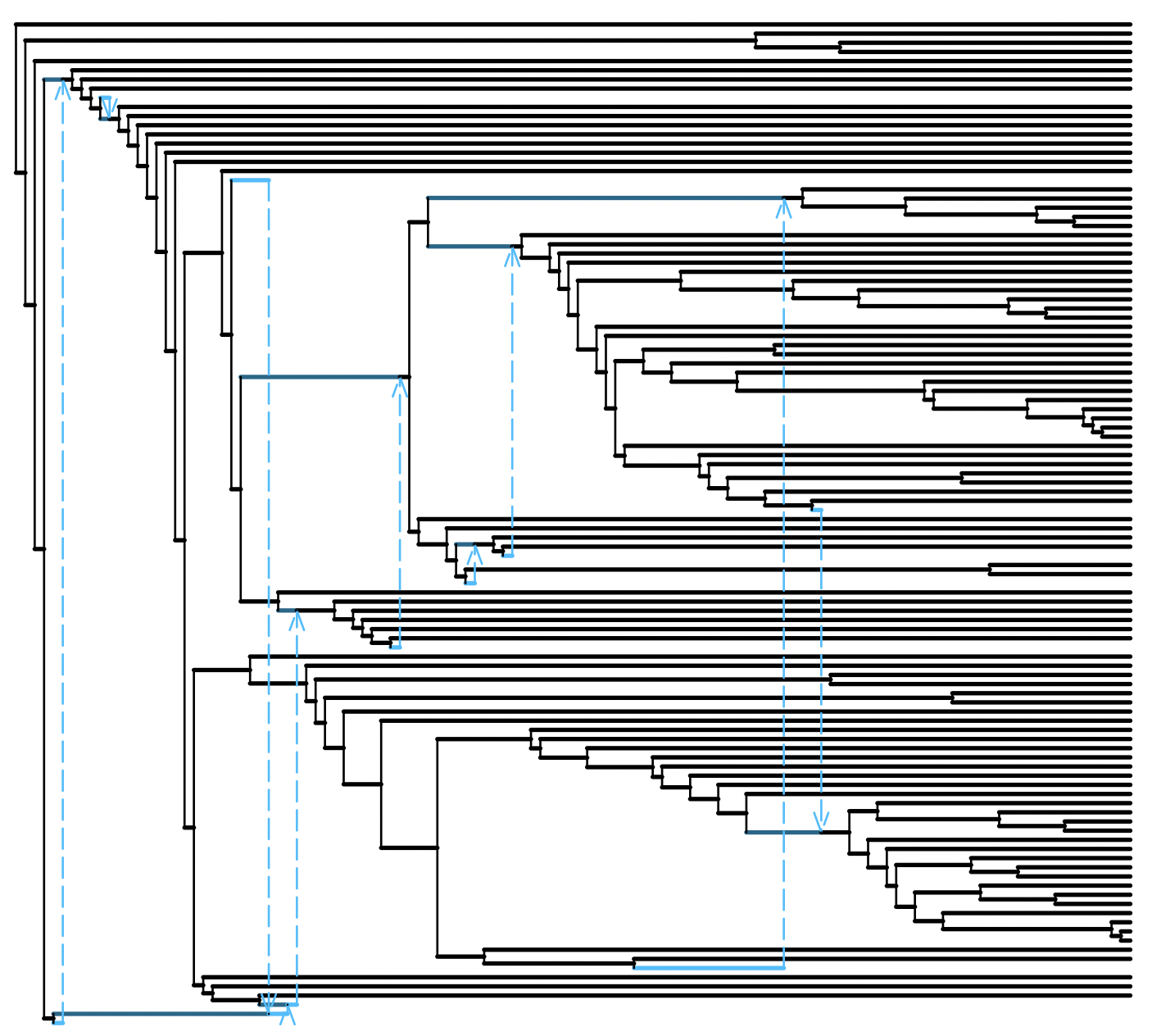}
\hspace{0.6cm}\includegraphics[scale=0.10]{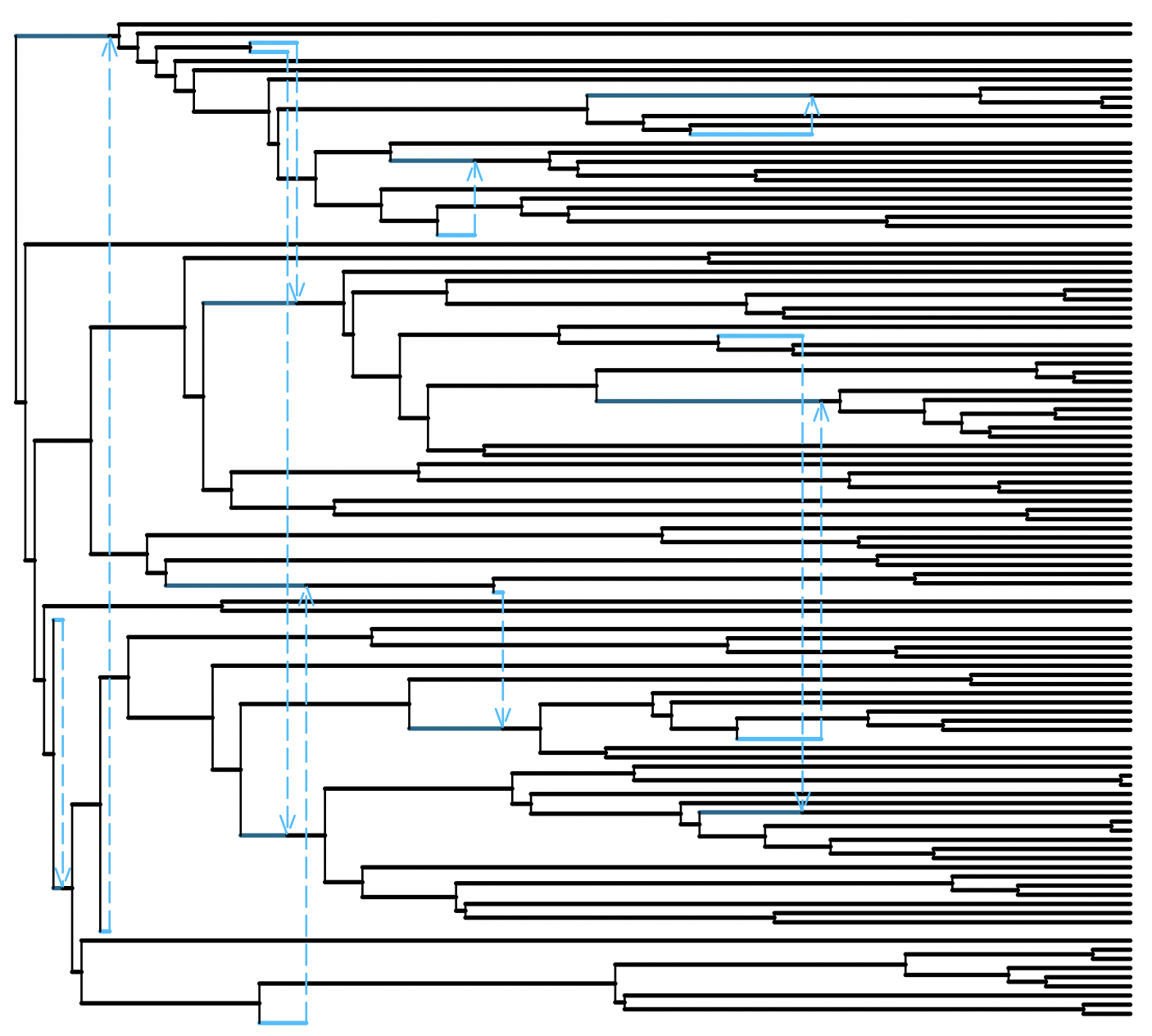}
\hspace{0.6cm}\includegraphics[scale=0.10]{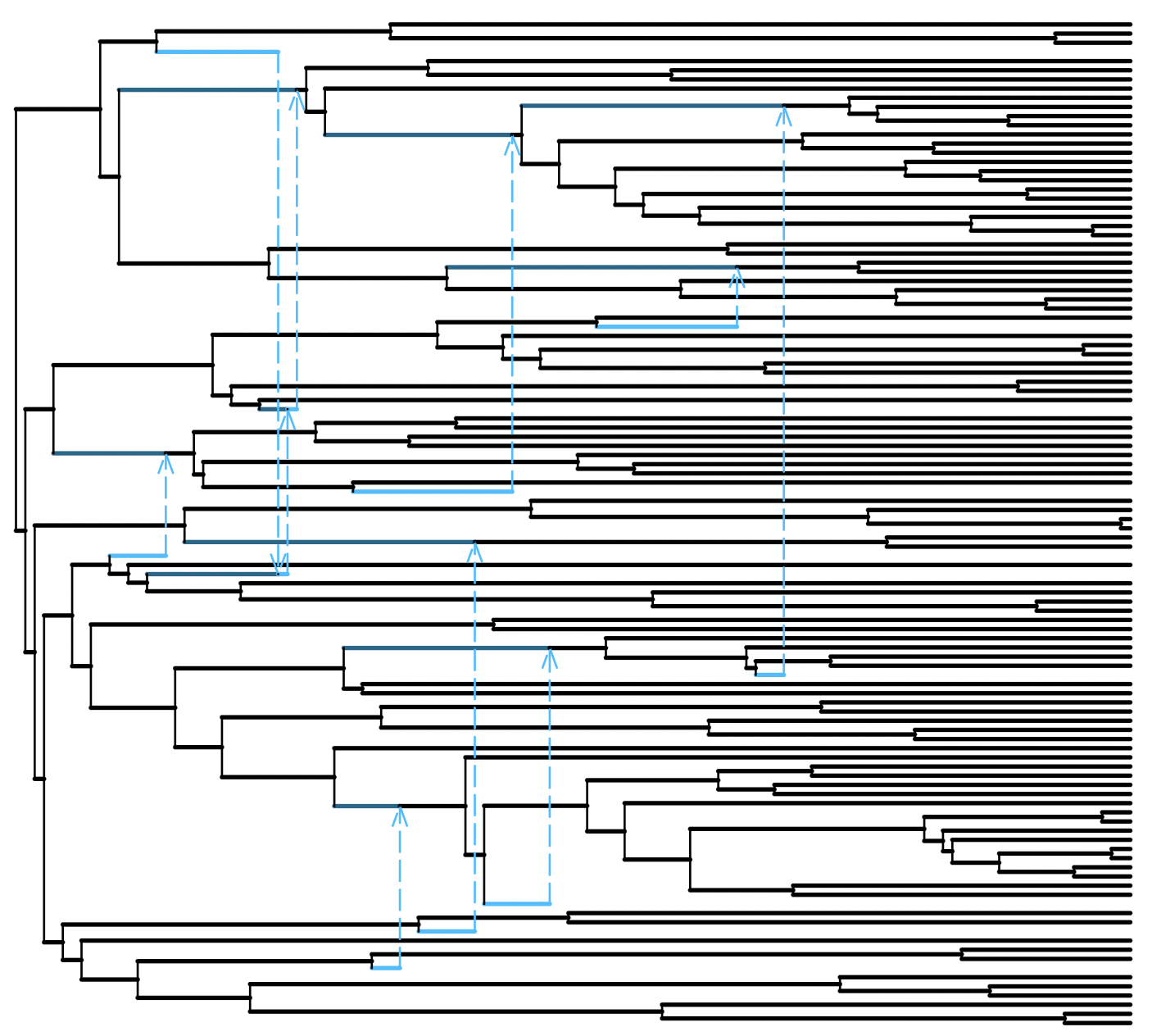}
\hspace{0.6cm}\includegraphics[scale=0.10]{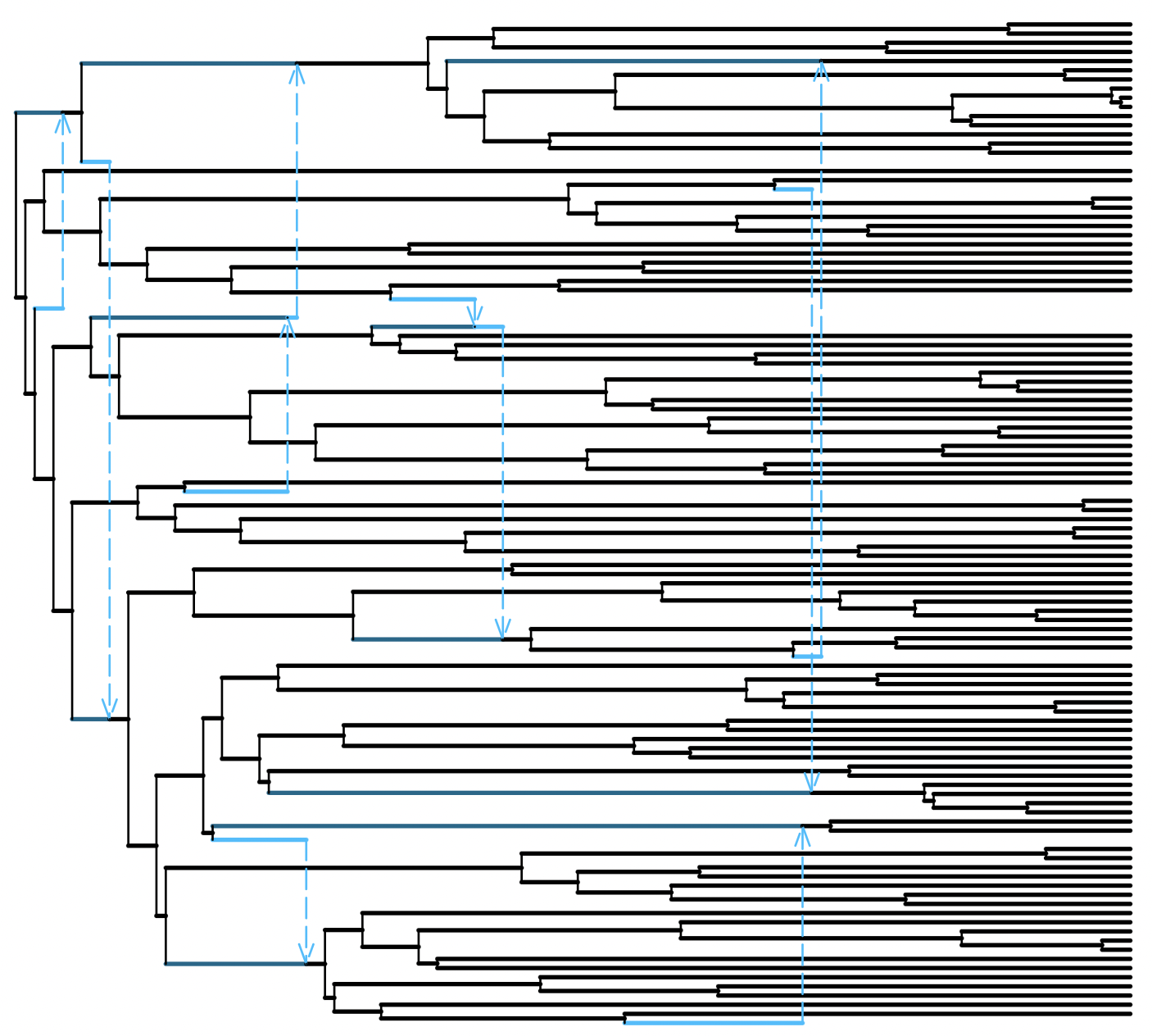}
\hspace{0.6cm}\includegraphics[scale=0.10]{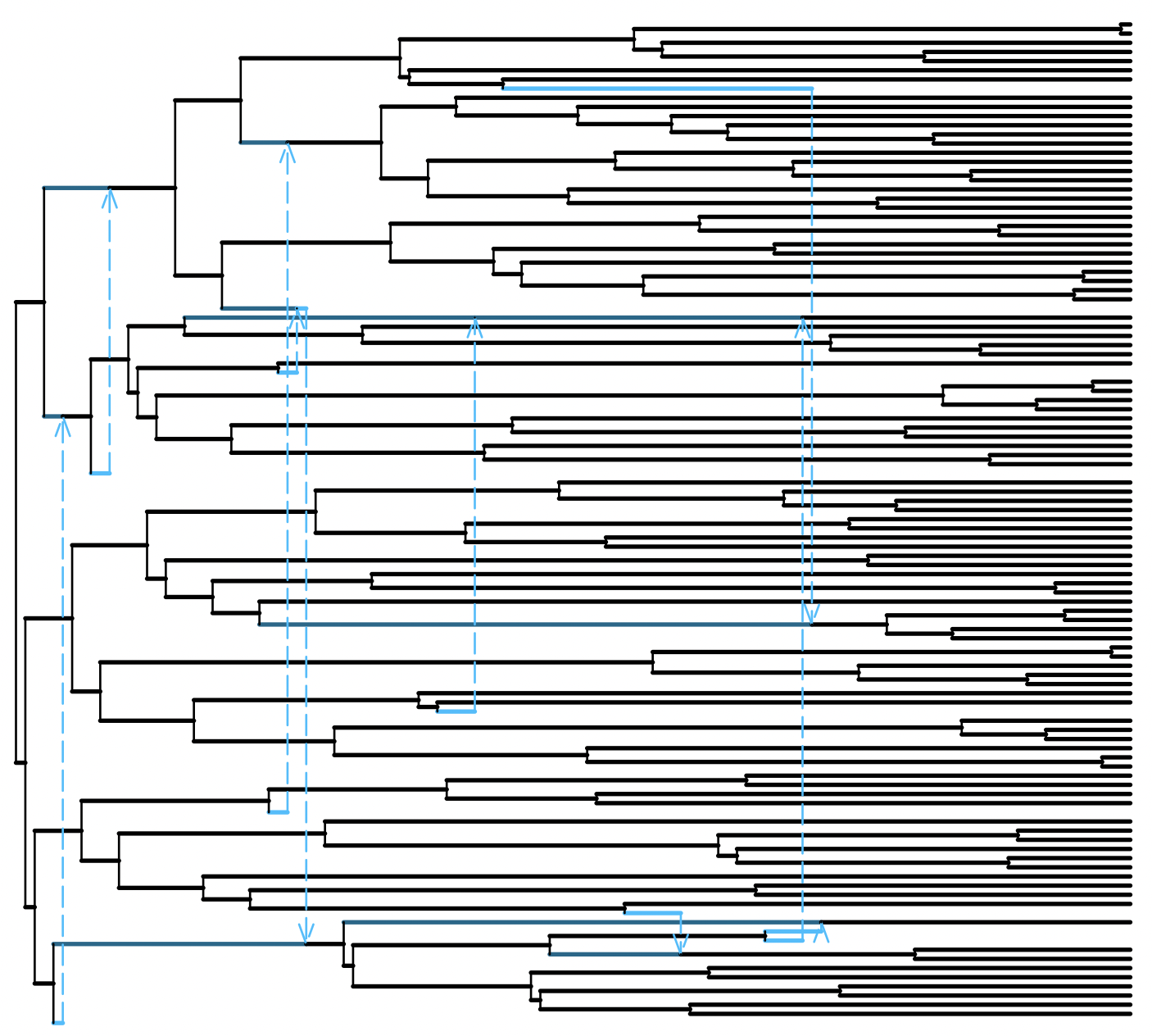}\\[0.5cm]
\includegraphics[scale=0.15]{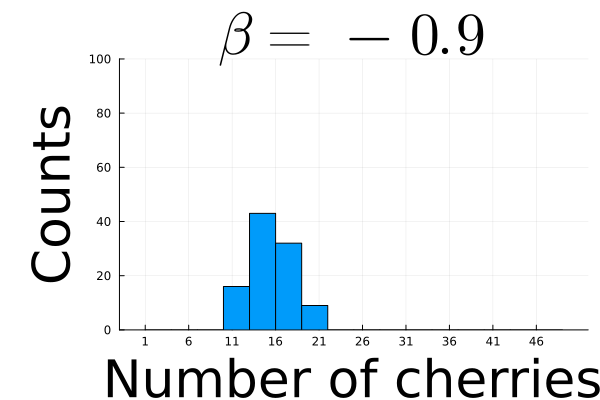}
\includegraphics[scale=0.15]{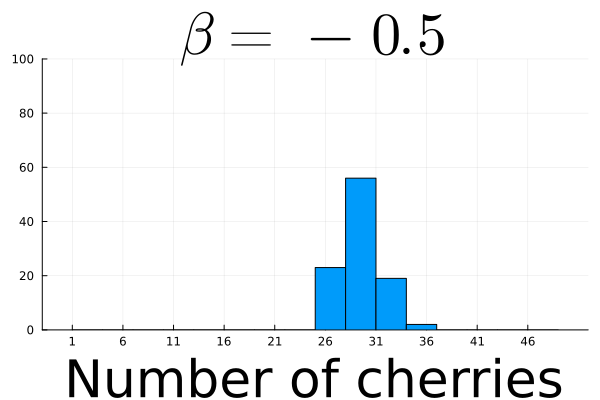}
\includegraphics[scale=0.15]{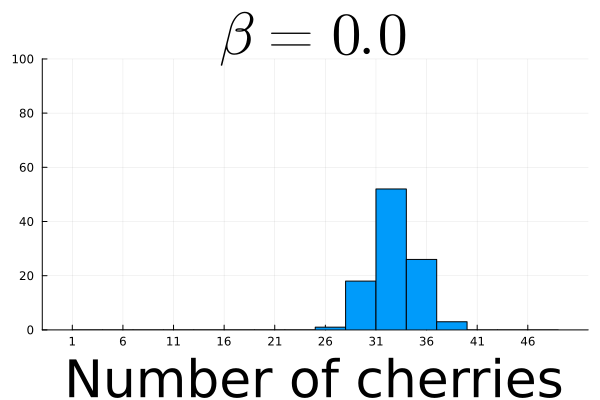}
\includegraphics[scale=0.15]{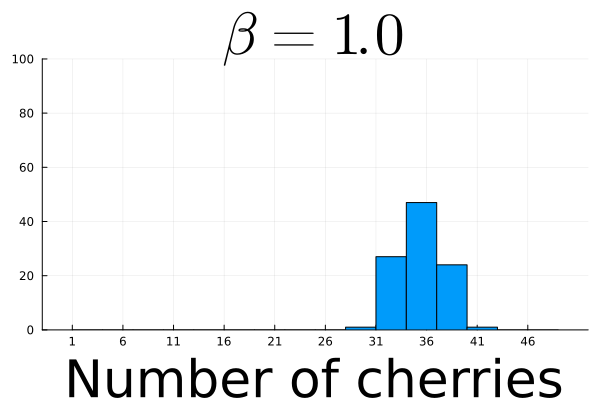}
\includegraphics[scale=0.15]{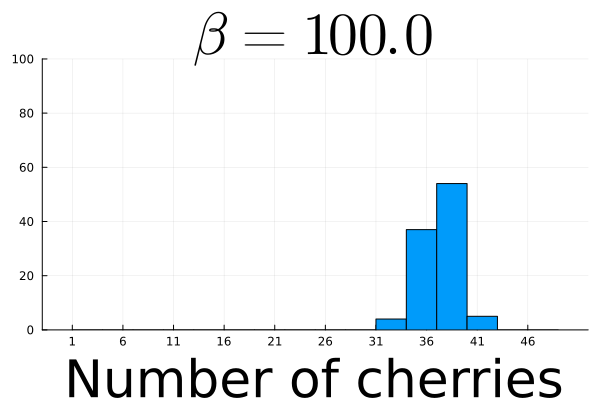}
\caption{Top: Medoid networks with 100 tips corresponding to different $\beta$ values: $\beta = -0.9$, $\beta = -0.5$, $\beta = 0.0$, $\beta = 1.0$, $\beta = 100.0$. Bottom: The number of cherries varies significantly across different $\beta$ values. Specifically, there is a notable difference between the groups with $\beta = -0.9$ and $\beta = -0.5$ compared to the other three groups. Although the networks generated with $\beta = 0.0$, $\beta = 1.0$ and $\beta = 100.0$ exhibit similar average numbers of cherries, their distributions of the number of cherries still somewhat differ, indicating underlying topological variations among these groups.}
\label{fig:simulated_five_beta_values}
\end{figure}

\begin{table}[h]
\centering
\begin{tabular}{|c|c|c|c|c|}
\hline
               & $d_2$ (our distance)  & Hardwired Cluster & Hamming & Diffusion \\ \hline
Time (seconds) & 9.12   & 670.80  &  59.14  &  137.85   \\ \hline
\end{tabular}
\vspace{0.5cm}
\caption{Running times for computing distance matrices in the simulation using four different distance methods. The running times were measured on a local laptop with Apple M1 Pro processor, 16 GB RAM, using Julia v1.10.}
\label{tab:running_time}
\end{table}

\begin{figure}[H]
\begin{subfigure}[b]{0.45\textwidth}
\centering
\includegraphics[width = \textwidth]{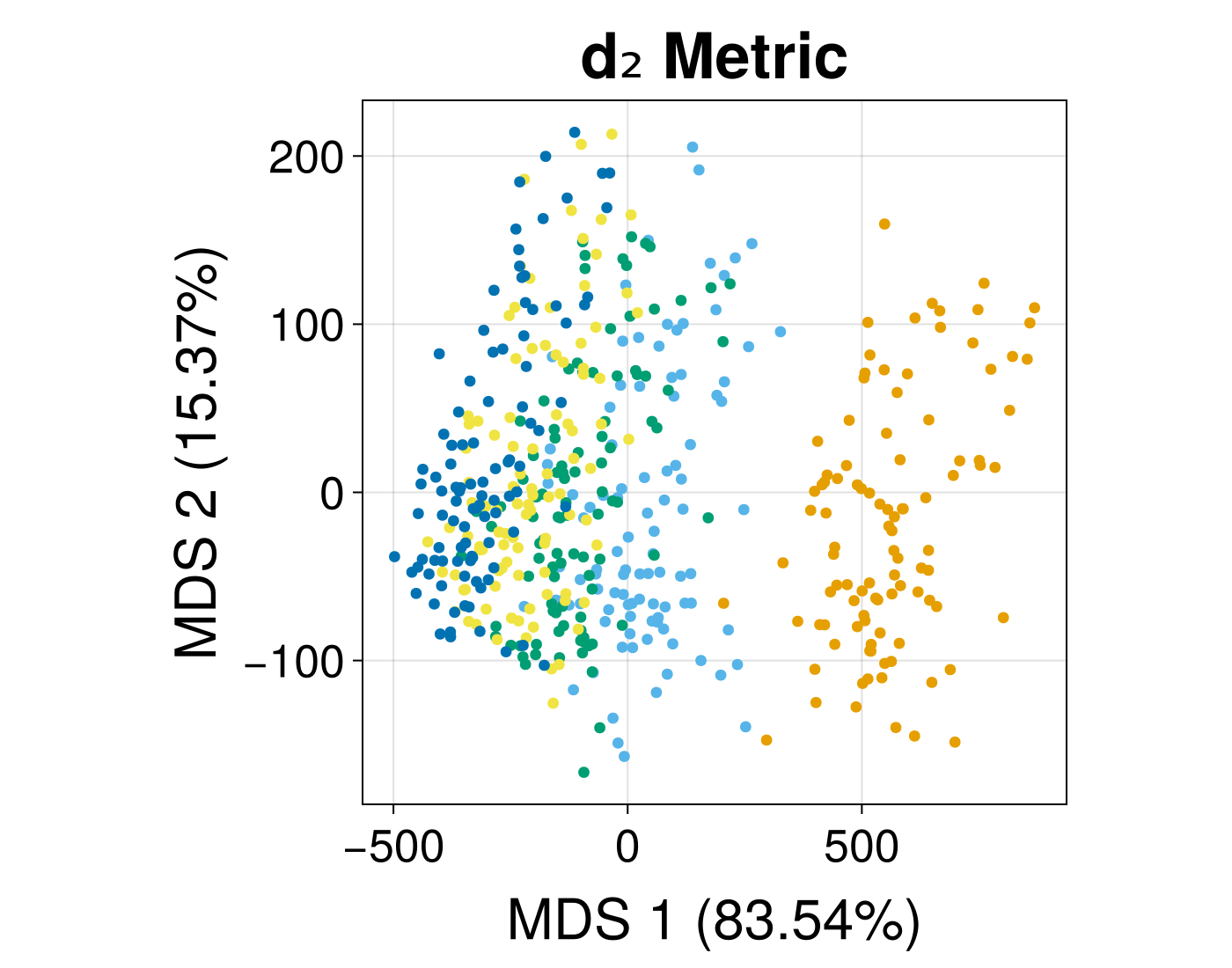}
\caption{}
\label{fig:Our_Metric}
\end{subfigure}
\begin{subfigure}[b]{0.45\textwidth}
\centering
\includegraphics[width = \textwidth]{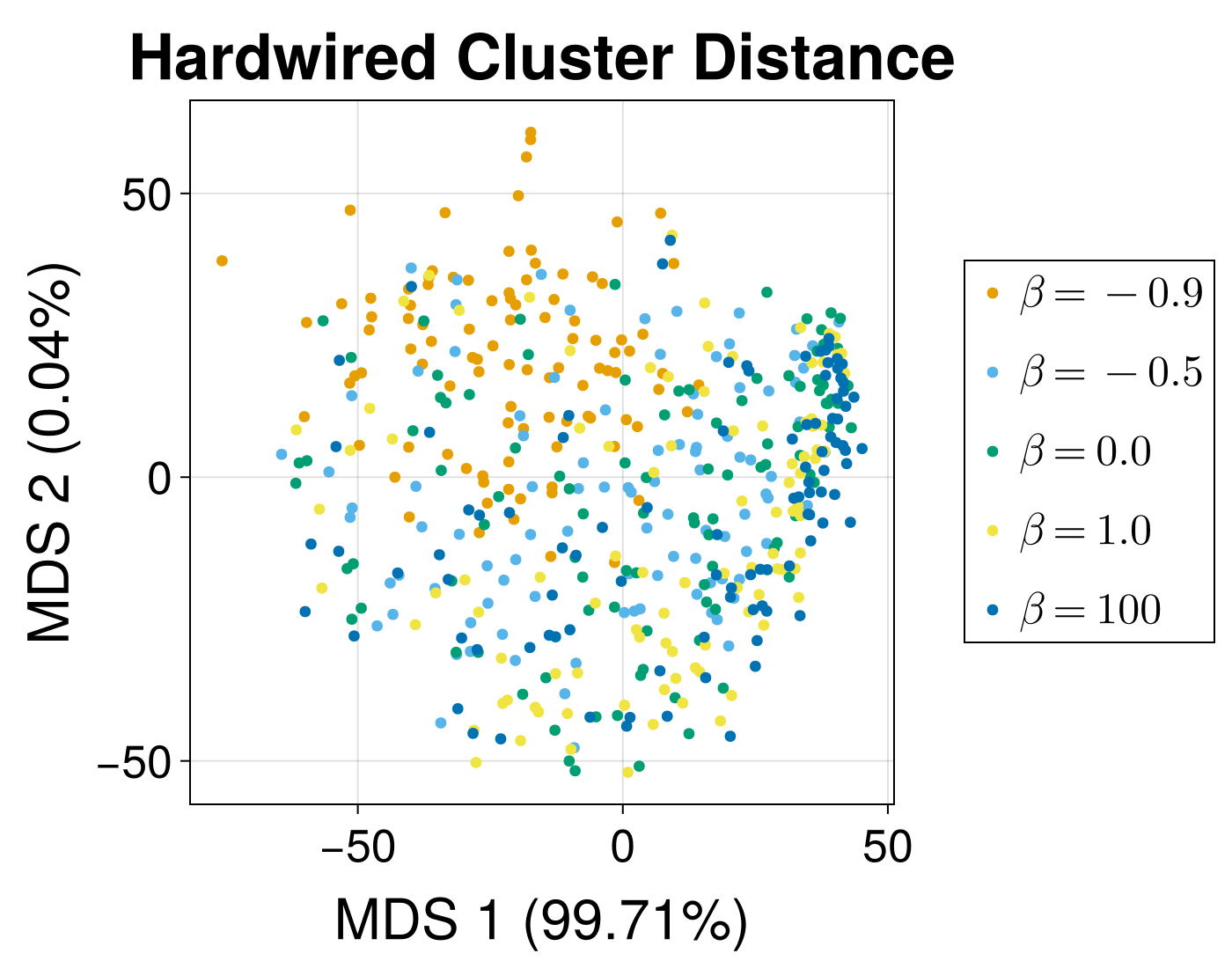}
\caption{}
\label{fig:Hardwired_Distance}
\end{subfigure}
\begin{subfigure}[b]{0.45\textwidth}
\centering
\includegraphics[width = \textwidth]{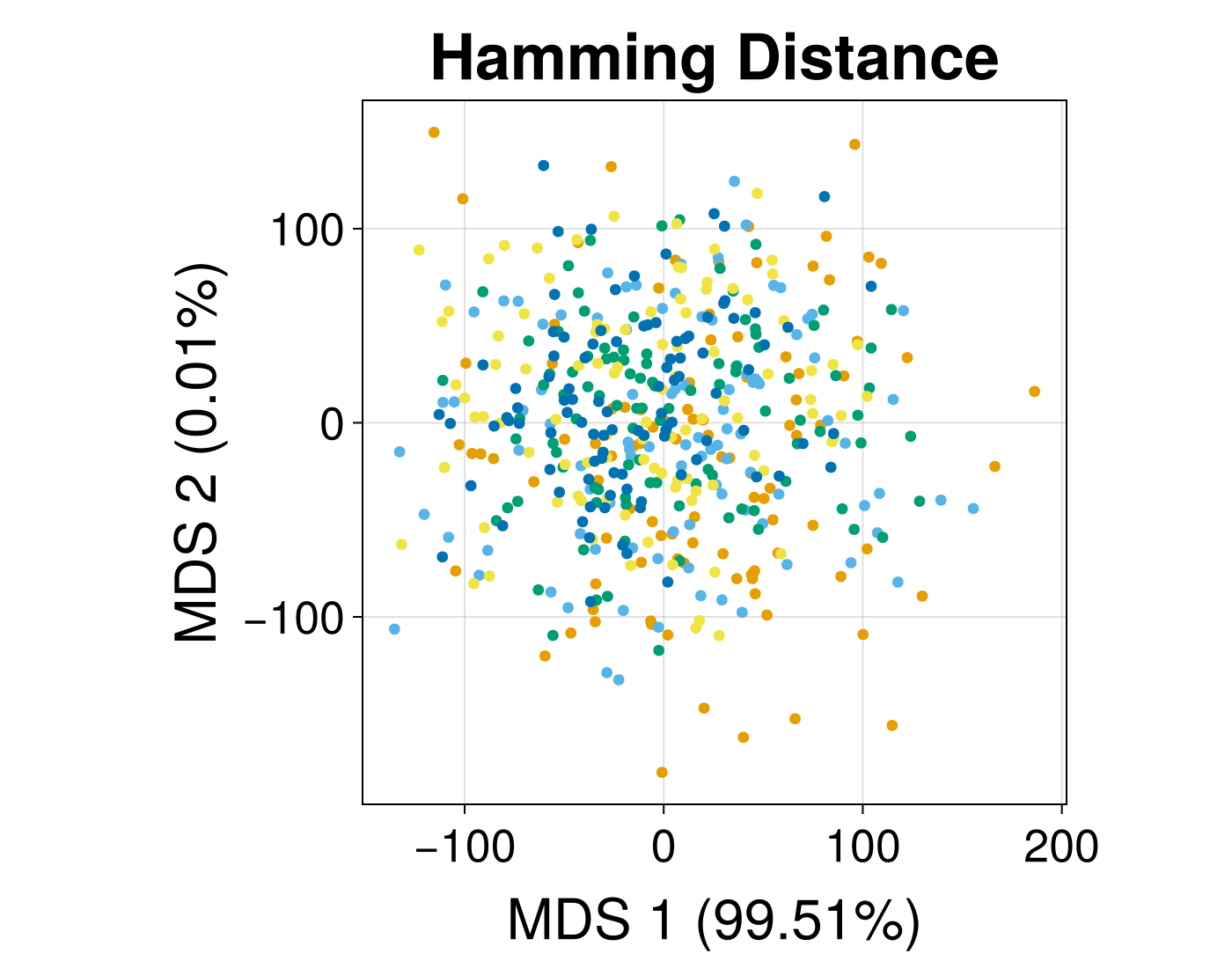}
\caption{}
\label{fig:Hamming_Distance}
\end{subfigure}
\hspace{0.09\textwidth}
\begin{subfigure}[b]{0.45\textwidth}
\centering
\includegraphics[width = \textwidth]{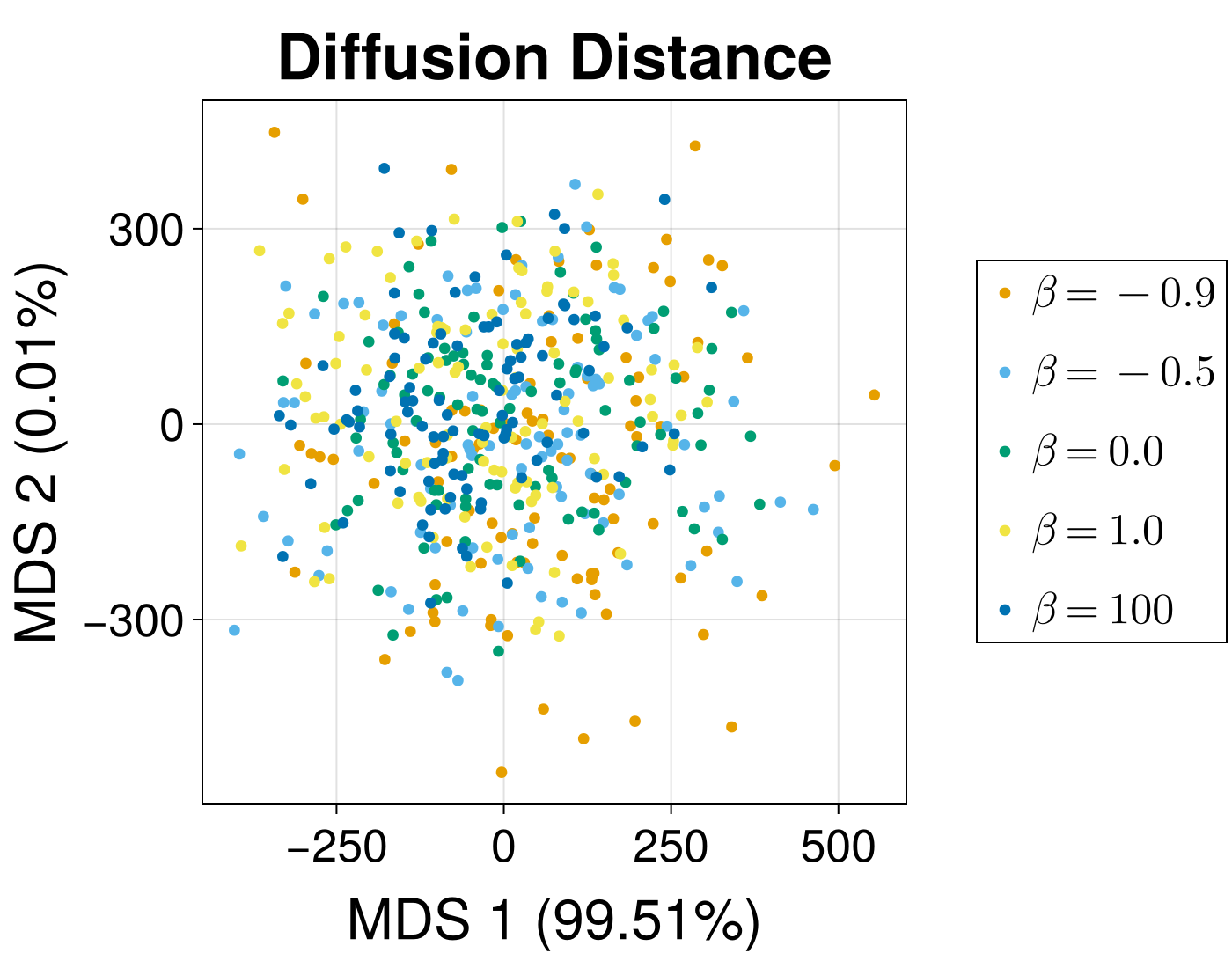}
\caption{}
\label{fig:Diffusion_Distance}
\end{subfigure}
\caption{Multidimensional scaling (MDS) representation of distances between 500 simulated isochronous ranked network shapes with n=100 leaf nodes, generated under different beta-splitting models. A total of 110 samples for each group are generated, of which the initial 10 were discarded as burn-in. For each of $\beta$ values in $\{ -0.9, -0.5, 0.0, 1.0, 100.0\}$, 100 isochronous ranked network shapes were simulated. The MDS plots illustrate the clustering patterns based on different distance metrics: (A) $d_2$ metric, (B) Hardwired Cluster Distance, (C) Hamming Distance, (D) Diffusion Distance.}
\label{fig:four_dist}
\end{figure}


\subsection{Comparing the evolutionary histories of influenza A/H1N1 across different regions}

We compared the evolutionary histories of A/H1N1 influenza viruses across three geographic regions: the United States, Europe and Asia. Viral molecular sequences were obtained from \href{https://doi.org/10.55876/gis8.260429ca}{GISAID} with identifiers available in \url{https://doi.org/10.55876/gis8.260429ca}. For each geographic region, we arbitrarily selected 40 human samples of A/H1N1 influenza viruses sampled on December 17 of 2024 in the United States, Europe and Asia, and obtained gene sequences for each of the five segments (PB1, HA, NP, NA and NS) from each sample, for a total of 200 sequences.
All segment sequences were aligned separately across all samples/patients using MAFFT \citep{Katoh2013}.
Phylogenetic network distributions for each gene and region were inferred using BEAST 2 \citep{Bouckaert2014} with the CoalRe package \citep{Muller2020}. From each posterior distribution, we randomly sampled 100 phylogenetic networks. For all analyses, the molecular clock rate was fixed at 0.0025 substitutions per site per year, and the reassortment rate prior was specified as an exponential distribution with mean 0.3, corresponding to an average of three reassortment events per lineage every 10 years, 
and in agreement with \citet{Muller2020}. Importantly, because our metric operates on unlabeled phylogenetic networks, it permits direct comparison of network topologies derived from different samples/patients.


We computed pairwise distances among the sampled phylogenetic networks using our $d_2$ metric. Although all networks contained a fixed number of tips ($n=40$), they differed in their numbers of hybridization events. Accordingly, we employed the alignment strategy described in Section \ref{sec:diff-hyb} for comparing rooted, unlabeled, ranked phylogenetic networks with varying hybridization counts.
We then applied multidimensional scaling (MDS) (Figure \ref{fig:three_regions}) to visualize the resulting distances, where each point corresponds to a sampled network from the posterior distribution of a given region. To further characterize differences in network structure, we generated cherry-count histograms for each region. 
The MDS visualization showed no clear separation between United States and Europe, whose points overlapped extensively, whereas networks from Asia formed a distinct cluster. However, the MDS approach had a greater power of discrimination than the cherry distributions alone. The average number of cherries in the United States and Europe were 10.44 and 10.26, respectively, compared with a slightly higher average of 11.14 in Asia. In addition, the standard deviations for the United States and Europe were also similar, at 0.90 and 0.97, respectively.
%
Posterior mean estimates for key parameters, including the reassortment rate, population size, height, total tree length, and number of reassortment nodes are summarized in Table \ref{tab:posterior_mean_values}.

%

\begin{figure}[H]
\begin{subfigure}[t]{0.45\textwidth}
\centering
\includegraphics[height=6cm, width = \textwidth]{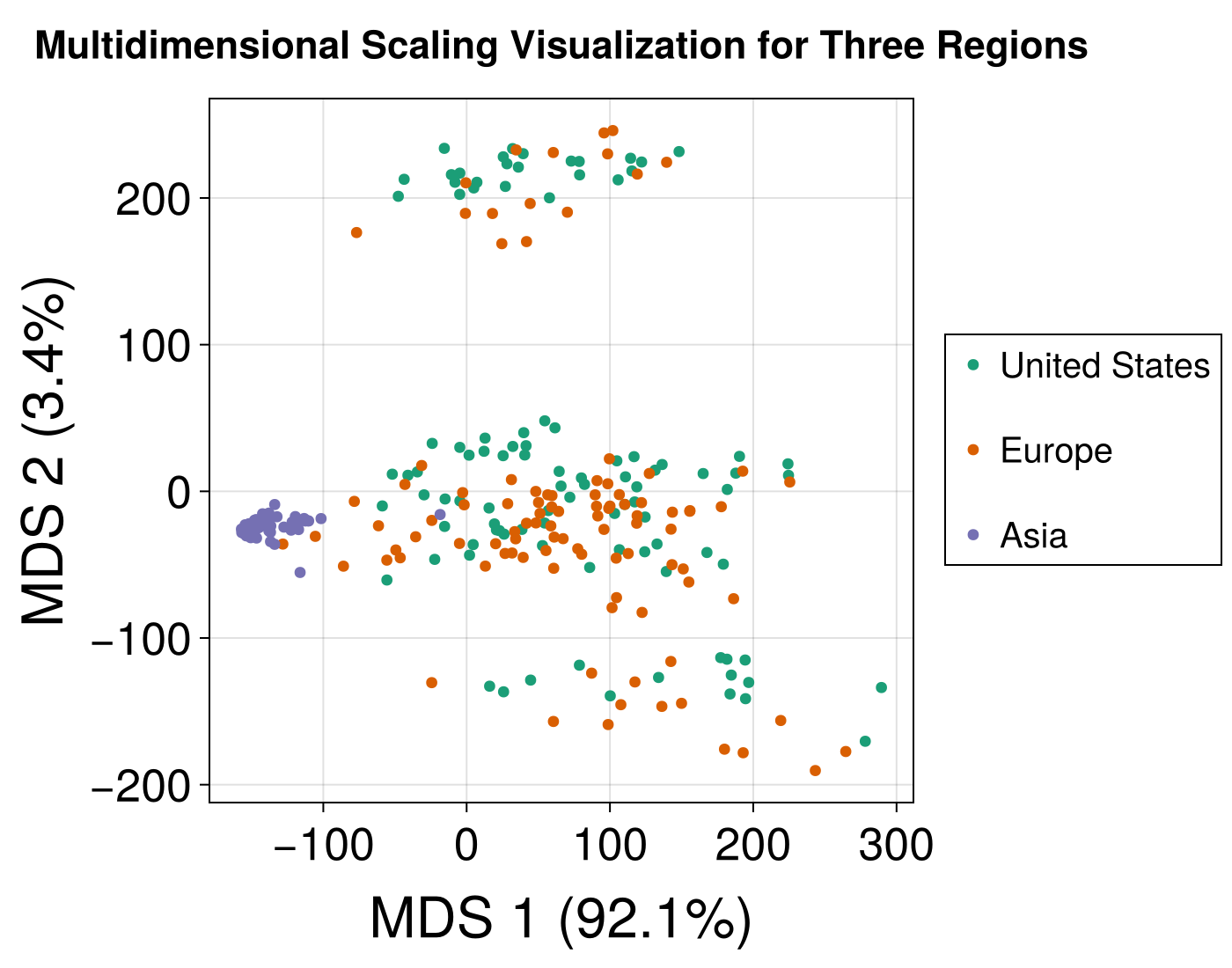}
\caption{}
\label{fig:mds_three_regions}
\end{subfigure}
\hspace{0.05\textwidth}
\begin{subfigure}[t]{0.45\textwidth}
\centering
\includegraphics[height=6cm, width = \textwidth]{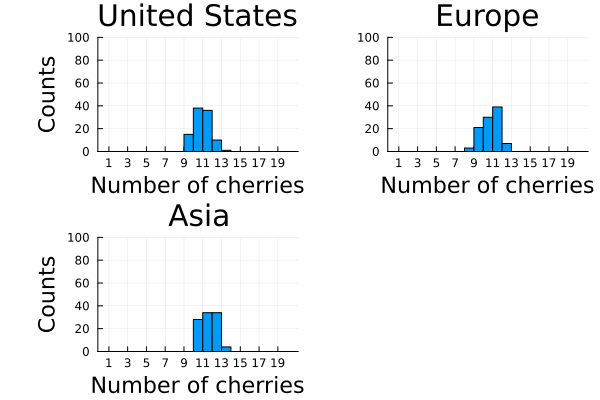}
\caption{}
\label{fig:cherry_numbers_for_three_regions}
\end{subfigure}
\caption{Comparison of the evolutionary histories of A/H1N1 influenza viruses across the United States, Europe and Asia. (A) Multidimensional scaling representation for the three different regions. (B) Histograms of cherry counts in phylogenetic networks by region. }
\label{fig:three_regions}
\end{figure}


\begin{table}[h]
\centering
\scalebox{0.8}{
\begin{tabular}{|c|c|c|c|c|c|}
\hline
       & Reassortment Rate & Population Size & Height & Total Length & Reassortment Node Count
       \\ \hline
United States  &  0.268 & 8.225 & 11.832 & 60.286  & 10.213       \\ \hline
Europe        & 0.353  & 5.524  & 10.254 & 46.432  & 10.511  \\ \hline 
Asia         & 0.193  & 3.838 & 2.251  & 20.862  & 3.206  \\ \hline       
\end{tabular}
}
\vspace{0.5cm}
\caption{The posterior mean values of the reassortment rate (per lineage per year), population size (year), height (year), total length (year) and reassortment node count across three regions.}
\label{tab:posterior_mean_values}
\end{table}


\subsection{Comparing the evolutionary histories of four viruses}

We examined the evolutionary histories of four viral groups: seasonal influenza A/H1N1, pandemic influenza A/H1N1, influenza A/H3N2, and influenza B by comparing the topologies of their inferred phylogenetic networks following the protocol of \citet{Muller2020} (excluding influenza A/H2N2 because it was not available in the repository). These datasets include two different types of influenza viruses, influenza A and influenza B. Influenza A viruses are classified into subtypes based on haemagglutinin (HA) and neuraminidase (NA) glycoproteins. We consider using the major subtypes H1N1 and H3N2 here. The H1N1 viruses are further divided into two distinct lineages originating from the 1918 and 2009 pandemics, where seasonal influenza A/H1N1 is descendant of the 1918 pandemic lineage and pandemic influenza A/H1N1 corresponds to the 2009 pandemic lineage. These two viruses differ greatly in their genomes (\citet{Smith2009}). For each viral group, all eight segments (PB2, PB1, PA, HA, NP, NA, M, NS) were analyzed. The seasonal influenza A/H1N1 includes 200 taxa sampled between 1995 and 2009, the pandemic influenza A/H1N1 includes 200 taxa sampled between 2009 and 2017, the influenza A/H3N2 includes 200 taxa sampled between 1988 and 2019, and the influenza B includes 200 taxa sampled between 1989 and 2019. Because sequence sampling times vary, the inferred networks are heterochronous, and thus, to compute pairwise distances among these networks, we applied the $h\mathbf{F}$-matrix encoding together with the heterochronous version of our metric, $d_2^{h}$, which accounts for branch lengths and heterochronous sampling times.
Pairwise distances were calculated using our weighted $d_2^{h}$ metric and visualized via MDS (Figure \ref{fig:coalre_data_4viruses}). The clear separation of the four viral groups in the MDS plot indicates that our metric effectively captures differences in their underlying evolutionary histories.


\begin{figure}[H]
\includegraphics[width=0.55\textwidth]{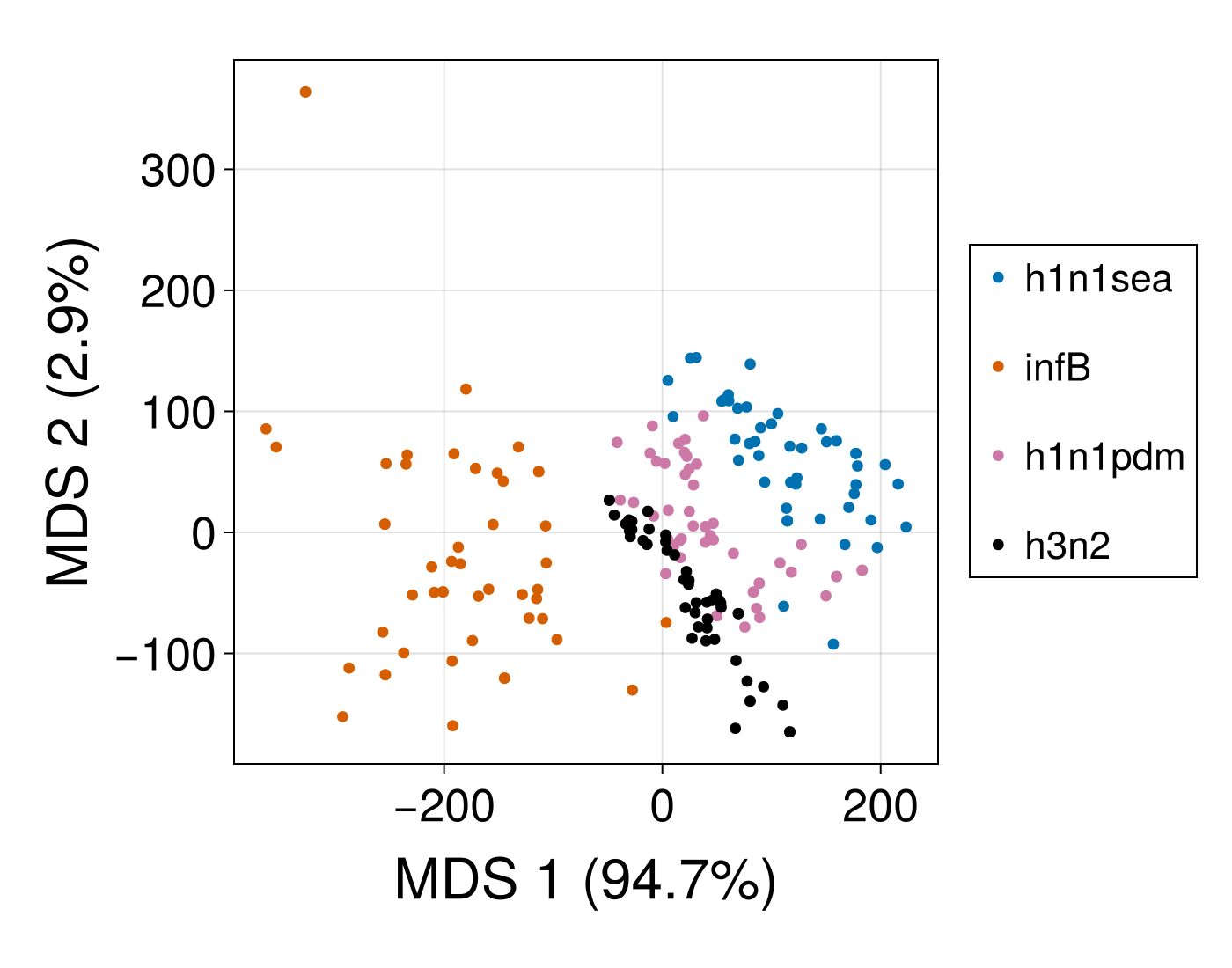}
\centering
\caption{Comparison of the evolutionary histories of seasonal influenza A/H1N1, pandemic influenza A/H1N1, influenza A/H3N2 and influenza B. Multiple dimensional scaling plot shows the four different network topologies group. This implies these four viruses have different evolutionary histories.}
\label{fig:coalre_data_4viruses}
\end{figure}

\section{Discussion}

Here, we developed a set of distance metrics for rooted, ranked, unlabeled phylogenetic networks that extend the $\mathbf F$-matrix approach from trees to account for hybridizations. We rigorously proved the bijection of our embedding, establishing a one-to-one correspondence between the space of phylogenetic networks and the associated $\mathbf F$-matrix space. 
Building on this foundation, we extended the metrics to compute pairwise distances between rooted, unlabeled, timed, binary, and isochronous phylogenetic networks by incorporating branch lengths, and defined weighted metrics $d_1^{w}$ and $d_2^{w}$ using the $L_1$ and $L_2$ norms, respectively. Because empirical analyses, such as those of influenza datasets, often involve taxa collected at different times, we further extended the framework to rooted, unlabeled, timed, binary, and heterochronous networks via a $h\mathbf{F}$-encoding method, defining $d_1^{h}$ and $d_2^{h}$ as the $L_1$ and $L_2$ distances of the corresponding encodings. 

To evaluate the performance of our metrics, we first simulated phylogenetic networks with varying topologies using a recently proposed beta-splitting model \citep{zhong2026beta}. Pairwise distances among these networks were computed using our $d_2$ metric and compared to alternative network distances, including the Hamming distance, hardwired cluster distance, and diffusion distance. Multidimensional scaling (MDS) was then applied to visualize the networks, with each point representing a single network. Our $d_2$ metric demonstrated superior ability to distinguish different topology distribution and achieved the fastest computational running time among the compared methods.  
We further applied our framework to empirical data, comparing the evolutionary histories of A/H1N1 influenza viruses across multiple geographic regions, as well as four distinct viral datasets, using the weighted $d_2^{w}$ and heterochronous $d_2^{h}$ metrics. In real influenza data, sampling at the same date often produces low divergence, resulting in inferred networks with few hybridizations or even purely tree-like topologies. To account for heterochronous sampling, we employed the $h \mathbf{F}$-encoding method with the $d_2^{h}$ metric to compute pairwise distances among phylogenetic networks of influenza A/H1N1 (seasonal and pandemic), influenza A/H3N2, and influenza B. The resulting MDS plot revealed clear clustering of the four viruses, indicating that their evolutionary histories are distinct and that our metrics can effectively capture biologically meaningful differences in reticulate evolutionary patterns.

Our metrics exhibit efficient computational performance and effectively discriminate among different phylogenetic network topologies. Nevertheless, several limitations warrant consideration. First, storing the matrix representations of networks can be memory-intensive, as the conversion of phylogenetic networks to triangular $\mathbf F$-matrices often results in large matrix sizes. Second, our $\mathbf F$-matrix space is formally defined as the set of matrices of equal dimensions. However, networks with differing numbers of speciation and hybridization events naturally map to matrices of varying sizes, placing them in distinct $\mathbf F$-matrix spaces. Although we implemented a practical strategy to align matrices of different hybridizations for distance computation, this remains a theoretical challenge when the number of samples varies. 

Looking forward, the matrix-based encoding provides a foundation for developing statistical summaries of the phylogenetic network space, including measures such as mean and variance, which could offer deeper insights into network topology distributions. 
Indeed, our proposed metrics have the potential to support statistical summaries of network samples. In particular, one could identify a matrix $F^*$ within the restricted space of matrices corresponding to ranked unlabeled networks that minimizes the sum of $d_2$ (or $d_1$) distances to a given sample. By the established bijection between these matrices and network topologies, $F^*$ would correspond to a ``mean'' or ``median'' network, providing a natural summary of the central tendency of the sample. This approach facilitates computation of summary statistics and formal comparison of sampling distributions, yet we do not explore it in the present work and leave its development and practical implementation as directions for future research.
Furthermore, as the first formal metric defined for unlabeled phylogenetic networks, our approach lays the groundwork for future methods and applications that require principled distances between such networks.




\section{Data and Code Availability}

Data used in comparing different regions experiment are from \href{https://gisaid.org/}{GISAID}. Full acknowledgment of all \href{https://gisaid.org/}{GISAID} data contributors is in gisaid\_supplemental\_table\_epi\_set\_260429ca.pdf, including accession numbers (EPI\_ISL IDs), originating and submitting laboratories, and associated authors for all sequences used in this study. Data are accessible via \href{https://gisaid.org/}{GISAID} EPI\_SET\_260429ca (DOI: \url{https://doi.org/10.55876/gis8.260429ca}).
All code and data needed to reproduce the results of this paper can be found in 
\url{https://github.com/JuliaPalacios/BirthHybrid_distance}.

\section{Acknowledgements}
J.A.P. acknowledges support from the NSF CAREER Award \#2143242 and NIH Award R35GM148338. 
This work was partly supported by the National Science Foundation (NSF CAREER DEB-2144367 to C.S.L.).

\newpage
\bibliographystyle{plainnat}
\bibliography{./networkspace_ref}

\end{document}